% Decay model: Experiments and theory
% 4 May 2014
% Toward MSWiM submission
%%% Local Variables:
%%% TeX-master: "master"
%%% End:
%\documentclass[11pt]{amsart}
%\documentclass[twocolumn,twoside]{IEEEtran}
%\documentclass{acm_proc_article-sp}
\documentclass{sig-alternate}

\usepackage[font=footnotesize,format=plain,labelfont=bf,up,textfont=it,up]{caption}
\usepackage{subfig}
\usepackage{epstopdf}
\usepackage{amssymb}
\usepackage{amsmath}
\usepackage{paralist} %% For compact list environments
%\makeatletter  % need to put this because otherwise @ cannot be in a macro
%\renewcommand\subsubsection{\@startsection{subsubsection}{3}{\parindent}{0ex plus 0.1ex minus 0.1ex}%
%{0ex}{\normalfont\normalsize\bfseries{}}}%
\renewcommand\subsubsection[1]{{\normalfont\normalsize\bfseries{#1.}}}
%\makeatother
\setcounter{secnumdepth}{2}  % don't number subsubsections (level 3)

\usepackage{algorithm}
\usepackage[noend]{algorithmic}
\usepackage{times}
\usepackage{url}
\usepackage{xspace}
\usepackage{scalefnt}
\usepackage[amssymb]{SIunits} % allows us to use \meter\per\second{} and format consistently

\usepackage{tikz}
\usetikzlibrary{arrows,shapes}

\newcounter{foo}
\newtheorem{theorem}[foo]{Theorem}

\newtheorem{lemma}[foo]{Lemma}

\def\calP{{\mathcal{P}}}

%% common units

\newcommand{\mw}{\milli\watt\xspace}
\newcommand{\dBm}{\deci\bel\milli\xspace}

\newcommand{\prob}[1]{\textsc{#1}}
\newcommand{\capacity}{\prob{Link Capacity}}

\newcommand{\multicapacity}{\prob{Multi-Channel Link Capacity}}

\newcommand{\geomodel}{\textsc{geo-sinr}}
\newcommand{\decaymodel}{\textsc{mb-sinr}}

\newcommand{\PRR}{\text{PRR}\xspace}
\newcommand{\RSS}{\text{RSS}\xspace}

\newenvironment{myitemize}{
\begin{itemize}
  \setlength{\partopsep}{0pt}
  \setlength{\topsep}{0pt}
  \setlength{\itemsep}{0pt}
  \setlength{\parskip}{-0pt}
  \setlength{\parsep}{-0pt}
}{\end{itemize}}

\DeclareGraphicsRule{.tif}{png}{.png}{`convert #1 `dirname #1`/`basename #1 .tif`.png}

\begin{document}

\title{Wireless Scheduling Algorithms\\ in Complex Environments
%\title{Measurement Based Interference Models for Wireless Scheduling Algorithms
%\title{Measurement Based Wireless Interference Models for Approximation Algorithms
%\title{Beyond Basic SINR: Modeling Spatial Variability in Wireless Networks
%\title{Beyond Basic SINR: Modeling Variability of Wireless Networks Algorithmically}
%\title{Beyond Basic SINR: Challenges in Wireless Connectivity and Capacity}
%\title{Connectivity approximation}
% XXX: Uncomment for camera-ready
%\thanks{Supported by grant-of-excellence no.~120032011 from the Icelandic Research Fund.}
}

\numberofauthors{6}

\author{
% You can go ahead and credit any number of authors here,
% e.g. one 'row of three' or two rows (consisting of one row of three
% and a second row of one, two or three).
%
% The command \alignauthor (no curly braces needed) should
% precede each author name, affiliation/snail-mail address and
% e-mail address. Additionally, tag each line of
% affiliation/address with \affaddr, and tag the
% e-mail address with \email.
%
% 1st. author
\alignauthor
Helga Gudmundsdottir\\
       \affaddr{School of Computer Science, CRESS}\\
       \affaddr{Reykjavik University, Iceland}\\
       \email{helgag10@ru.is}
% 2nd. author
\alignauthor
Eyj\'olfur I.~\'Asgeirsson\\
       \affaddr{School of Science and Engineering, ICE-TCS}\\
       \affaddr{Reykjavik University, Iceland}\\
       \email{eyjo@ru.is}
% 3rd. author
\alignauthor 
Marijke H. L. Bodlaender\\
       \affaddr{School of Computer Science, ICE-TCS}\\
       \affaddr{Reykjavik University, Iceland}\\
       \email{marijke12@ru.is}
\and  % use '\and' if you need 'another row' of author names
% 4th. author
\alignauthor Joseph T. Foley\\
       \affaddr{School of Science and Engineering}\\
       \affaddr{Reykjavik University, Iceland}\\
       \email{foley@ru.is}
% 5th. author
\alignauthor Magn\'us M.~Halld\'orsson\\
       \affaddr{School of Computer Science, ICE-TCS}\\
       \affaddr{Reykjavik University, Iceland}\\
       \email{mmh@ru.is}
% 6th. author
\alignauthor Ymir Vigfusson\\
       \affaddr{School of Computer Science, ICE-TCS, CRESS}\\
       \affaddr{Reykjavik University, Iceland}\\
       \email{ymir@ru.is}
}

%\author{\IEEEauthorblockN{
%Helga Gudmundsdottir\IEEEauthorrefmark{1}\IEEEauthorrefmark{4},
%Eyj\'olfur I.~\'Asgeirsson\IEEEauthorrefmark{2}\IEEEauthorrefmark{3},
%Marijke H. L. Bodlaender\IEEEauthorrefmark{1}\IEEEauthorrefmark{3},
%Joseph T. Foley\IEEEauthorrefmark{2},\\
%Magn\'us M.~Halld\'orsson\IEEEauthorrefmark{1}\IEEEauthorrefmark{3},
%Geir M.~J\"arvel\"a\IEEEauthorrefmark{1}\IEEEauthorrefmark{4},
%Henning Ulfarsson\IEEEauthorrefmark{1}\IEEEauthorrefmark{3} and
%Ymir Vigfusson\IEEEauthorrefmark{1}\IEEEauthorrefmark{3}\IEEEauthorrefmark{4}\\
%\begin{tabular}{cc}
%\IEEEauthorblockA{\IEEEauthorrefmark{1}School of Computer Science, \IEEEauthorrefmark{3}ICE-TCS, \IEEEauthorrefmark{4}CRESS & 
%\IEEEauthorrefmark{2}School of Science and Engineering\\
%Reykjavik University, Iceland &
%Reykjavik University, Iceland\\
%E-mail: \{helgag10,marijke12,mmh,geirmj11,henningu,ymir\}@ru.is} &
%E-mail: \{eyjo,foley\}@ru.is}
%\end{tabular}
%%% Should we expand ICE-TCS and CRESS? --foley
%%\IEEEauthorblockA{\IEEEauthorrefmark{2}School of Computer Science\\
%%Reykjavik University, Reykjavik, Iceland\\
%}

%\author{Helga Gudmundsdottir \and Eyj{\'o}lfur I. {\'A}sgeirsson
%\and Marijke H. L. Bodlaender \and Joseph T. Foley \\
%%\and Geir Matti J\"arvel\"a \and Henning Ulfarsson
%\and {\'Y}mir Vigf{\'u}sson
%\and Magn{\'u}s M. Halld{\'o}rsson}
%\author{Helga Gudmundsdottir \and Eyj{\'o}lfur I. {\'A}sgeirsson \and Marijke H. L. Bodlaender \and Joseph T. Foley \and Magn{\'u}s M. Halld{\'o}rsson \and {\'Y}mir Vigf{\'u}sson}

\date{\today}
\maketitle

\begin{abstract}
Efficient spectrum use in wireless sensor networks 
through spatial reuse requires effective models of packet reception
at the physical layer in the presence of interference.  
%Modeling physical layer behavior of packet reception in the presence
%of interference is central to achieving efficient spectrum use in
%wireless sensor networks via spatial reuse.  
Despite recent progress in analytic and simulations research into
worst-case behavior from interference effects, these efforts
generally assume geometric path loss and isotropic 
transmission, assumptions which have not been borne out in experiments.
%Consequently, a growing literature into theoretical effects

%On one hand, analytic and
%simulations research has largely relied on assumptions of geometric
%path loss and isotropic transmission which have not been borne out in
%experiments.  
%Experimental research, on the other hand, has not
%adopted theoretical models and instead focused on measuring the
%reality on the ground.

Our paper aims to provide a methodology for grounding 
theoretical results into wireless interference in experimental reality. 
We develop a new framework for wireless algorithms in which
distance-based path loss is replaced by an arbitrary gain matrix,
typically obtained by measurements of received signal strength (RSS).
Gain matrices allow for the modeling of complex environments, e.g., with
obstacles and walls. 
We experimentally evaluate the framework in two indoors testbeds with
20 and 60~motes, and confirm superior predictive performance in packet reception rate for a gain matrix model
over a geometric distance-based model.

At the heart of our approach is a new parameter $\zeta$ called \emph{metricity} which indicates how
close the gain matrix is to a distance metric, effectively measuring
the complexity of the environment. A powerful theoretical feature of this parameter is
that \emph{all} known SINR
scheduling algorithms that work in general metric spaces carry over to arbitrary gain matrices % XXX: YV is "arbitrary" correct?
and achieve equivalent performance guarantees in terms of $\zeta$ as previously obtained in terms of the path loss constant.
Our experiments confirm the sensitivity of $\zeta$ to the nature of
the environment.  
Finally, we show analytically and empirically how multiple channels can be leveraged to improve metricity and thereby performance.
%
%We experimentally evaluate metricity in two indoors testbeds with
%20 and 60~motes. We compare the model to The results validate the basic properties of the
%model, the predictive ability of packet reception, dominance over
%distance-based models, and 
% XXX (YV): More about zeta
We believe our contributions will facilitate experimental validation for recent advances in algorithms for physical wireless
interference models.

\end{abstract}

% Tradeoffs between generality and computability:
% Abstract SINR is most general, but is IS-hard to approximate.
% What generalizations are possible that still maintain tractability?

\section{Introduction}

% Laying the ground
There is mounting demand for tomorrow's wireless networks to provide
higher performance while lowering costs.
%%% than they currently achieve.
%While a lot of research is conducted on higher-level protocols, such as 
A central challenge in meeting this demand is to improve the utilization of the wireless spectrum to enable %accommodate
simultaneous communications at the same radio frequency. 
To accommodate research into efficient use of wireless channels at large-scale, for instance through spatial reuse,
we require practical models of signal propagation behavior and reception at the physical layer
in the presence of wireless interference.

%
%For algorithms to improve utilization on a large scale, we need to be able to predict 
% XXX: Does the previous sentence still make sense?
%effectively the rate of packet capture.
%successful reception of transmissions. 
%% "signal transmissions"
%This requires good models of the physical layer behavior of signal propagation and 
%reception in the presence of wireless interference.

%%Theoretical models have been of two types. 

% Models: Graphs + SINR
Early models of worst-case wireless communication under interference were
graph-based, most commonly based on distances.  In comparison,
physical models, or \emph{SINR} (signal to interference and noise
ratio) models, capture two important features of reality: signal
strength decays as it travels (rather than being a binary property)
% (YV: I don't extract meaning from this) signal strength is analog (rather than binary) 
% decays as it travels 
and interference accumulates (rather than being a pairwise relation).

%Numerous experimental results have shown that 
%have indicated that simplistic range-based models of wireless reception
%are insufficient \cite{ganesan2002,Zhao2003,kotz2004experimental,aguayo2004link,padhye2005estimation,Zhou2006,zamalloa2007}. 
%
%usually either based on ranges (or \emph{discs}) or hop-count.
%Graph-based models that were long prevalent
%have been shown to both under and over-estimate interference.

% Geometric SINR
%Since the influential work of Gupta and Kumar \cite{kumar00} in 2000,
Analytic work on SINR -- introduced by Gupta and Kumar \cite{kumar00}
in an average-case setting and Moscibroda and Wattenhofer \cite{MoWa06} in worst-case
-- has generally assumed \emph{geometric path loss},
referred to here as the {\geomodel} model:
signals decay as a fixed polynomial of the distance traveled.
%the signal received equals the original transmitted signal divided by its \emph{decay} that is a fixed polynomial of the distance traveled.

While free space exhibits geometric decay, the reality for \emph{real-world} wireless environments is more complex.
%much more complicated.
% Real environment.
When located above an empty plane, a signal bounces off the ground, resulting in 
complicated patterns of superpositions known as \emph{multi-path fading}.
Most real scenarios are more complex, with walls and obstructions.
In particular, cityscape and indoor environments are notoriously
hard to model.
%
% Problem: Current models fail
Moreover, the simple range-based models often make further assumptions into geometric path loss that do not concord with experiments, 
such as smooth and isotropic polynomial decrease in the signal strength.
%Many experimental results have shown that not only are the simple range-based models na\"ive, 
%but so are the geometric path loss assumptions of 
%signal reception is actually
%\emph{not} well correlated with distance
%\cite{ganesan2002,Zhao2003,kotz2004experimental,aguayo2004link,padhye2005estimation,Zhou2006,zamalloa2007}. 
In fact, quoting recent meta-analysis \cite{baccour2012radio}, ``link quality is not
correlated with distance.'' 

\iffalse
% Experimentalists
As a result, experimental research has moved away from \emph{prescriptive} models such as {\geomodel} towards \emph{descriptive} models based on measurements \cite{reis2006}.
%\cite{son2006,gummadi2007,MaheshwariJD2008,chen2010,huang2011,sevani2012sir,qi2013}. 
The additivity assumption of the SINR model has been validated in several studies, permitting decisions to be based on measuring interferences from individual nodes rather than from arbitrarily-sized subsets simultaneously.
% XXX citation for "several studies" ?
\fi

% Log-normal shadowing
Various stochastic extensions of geometric path loss have been proposed to
address the observed variability in signal propagation.
The most common are \emph{log-normal shadowing} and \emph{Rayleigh fading}
for addressing long- and short-distance variability, respectively.
Both modify the signal strength multiplicatively by an exponentially
distributed random variable.
These models are highly useful both for generating input for signal propagation 
simulations and for average-case analysis of wireless interference algorithms.
% The instances generated tend to have the appearance 
% of the unpredictability inherent in observed propagation data.

A complementary view to stochastic studies,
%
%The nature of stochastic models makes predictability  
%Yet as with all models, the 
%For handling actual instances, their utility necessarily depends on
%the suitability of the assumptions. 
% Worst-case world view
with deep roots in computer science theory, is
to allow for worst-case behavior and obtain guarantees that hold for
\emph{all} instances to the problem at hand.  To avoid such results becoming too pessimistic,
proper characterizations or parameterizations are often essential.
Our goal is to contribute to such ``any-case'' analysis that avoids
making assumptions about the environment that may not be reflected in actual real-world scenarios.

%\subsection*{Measurement-Based Model}
\subsubsection{Our contributions}
% 1) Framework: Abstract SINR
%% \textbf{(i)} 
We propose moving theoretical algorithm design away from assuming geometric path loss 
models to an abstract SINR formulation with a matrix
representing the fading (or signal decay) between pairs of nodes
in an arbitrary environment.
% This matrix can capture arbitrary environmental settings, allowing for
% asymmetry, anisotropic transmissions, shadowing, and complex fading
% patterns.
The matrix would typically be generated from direct measurements of
\emph{received signal strength} (RSS) provided by motes, as proposed
by experimentalists \cite{son2006,reis2006,MaheshwariJD2008}.
The RSS matrix could also be generated by other means, such as
by inference, history, stochastic models or by accurate environmental models.
% We refer to this measurement based model as {\decaymodel}.

% 2) Metricity
Following this approach, worst-case algorithmic analysis is heavily contingent on the
contents of the RSS matrix, with unconstrained settings causing computational intractability.
We introduce a new measure that reflects the attenuation complexity of the
environment described by the RSS matrix.  
Dubbed \emph{metricity} and denoted $\zeta$, this parameter
intuitively represents how close the RSS matrix is to a distance
metric.
% Theoretical implications
From a theoretical standpoint, the definition of metricity has extensive implications:
\textbf{All} SINR algorithms that work in arbitrary metric spaces work 
seamlessly in the abstract model, with performance ratio in terms of metricity that is equivalent to the original dependence on 
the path loss constant. 

% XXX: path-loss vs path loss, multi-path vs multi path vs multipath

% 2b) Experimental validation: zeta and SINR formula
In an experimental evaluation on 
two testbeds of 20 and 60 nodes, our measurements indicate
that the metricity parameter corresponds to the complexity of the
environment. The experiments also suggest that the SINR model --
without the geometric assumption -- is of high fidelity, capturing
signal propagation and reception well, even in environments with
obstacles and lack of line-of-sight.

% 3) Multipath fading
We further address the effect of multi-path fading by giving
transmitters the choice of several channels/frequencies. Empirically, we find that in an
environment with extensive multi-path propagation (but otherwise
simple), the choice improves the metricity parameter significantly.
Analytically, we show that a known algorithm for capacity
maximization can be extended to handle multiple channels without loss in performance.

\subsubsection{Roadmap}
In the following section, we formally define our concepts,
describe and calibrate our experimental setup in 
and validate the basic premises of our framework.
We analyze the metricity parameter $\zeta$ in Section~\ref{sec:metricity} and present experimental results.
By leveraging the metricity concept in our framework, 
we introduce an approach for tackling multi-path fading using multiple frequencies in Section 
\ref{sec:multichannel} and present experimental and theoretical results.
We survey related work in Section~\ref{sec:related} and
conclude in Section~\ref{sec:discussion}.

%, along with a metricity parameter $\zeta$.

% \begin{defn}
% \label{defn:decaymodel}
% The generic \emph{decay model}, parameterized by value $\zeta$,
% consists of a space of points (nodes) and a decay function $f$
% satisfying a relaxed triangular inequality.
% A transmission is successful if Eqn.~\ref{eqn:sinr} holds using Eqn.~\ref{eqn:signaldecay}.
%\end{defn}

%It is easily shown that geometric path loss is captured by Definition \ref{defn:pathloss} using $\varphi = 2^\alpha$.
%For any three points $u,v,w$ in the plane, we know by the triangle inequality that $d(u,v) \leq d(u,w) + d(w,v) \leq 2 \max(d(u,w), d(w,v))$.
%Filling in the formula for path loss we simply get $d(u,v)^\alpha \leq 2^\alpha \max(d(u,w)^\alpha, d(w,v)^\alpha)$ and thus $f(u, v) \le \varphi \max(f(u,w), f(w, v)).$

\newcommand{\classroom}{\textsc{Tb-20}}
\newcommand{\basement}{\textsc{Tb-60}}
%
% To do list:
%

\section{Model Validation}\label{sec:experiments}

Our first order of action is to verify that the SINR model, without
the geometric assumption, is faithful to reality.  We assess the
predictability of packet reception rate (\PRR) under interference, and
the assumption of the additivity of interference, by comparing our
abstract model to the original \geomodel{} through experiments. The 
experiments are conducted in two testbeds (Fig~\ref{fig_testbed}): one
in the middle of a large open classroom (\classroom{}) and another in
a challenging basement corridor (\basement{}).  
%Our results suggest
%that measurement-based SINR model is a good predictor of \PRR, and significantly
%better than using geometric path loss information alone.

\subsection{The Physical Model}
\label{sec:models}

% The pure SINR model
The SINR model is based on two key principles: \textbf{(i)} a signal decays as
it travels from a sender to a receiver, and \textbf{(ii)} interference
-- signals from other sources than the intended transmitter -- accumulates.
A transmission is successfully received if and only if the strength of the received signal relative
to interference is above a given threshold.

Formally, a \emph{link} $\ell_v = (s_v, r_v)$ is given by a pair of
nodes, sender $s_v$ and a receiver $r_v$.  The channel \emph{gain}
$G_{uv}$ denotes the reciprocal of the signal decay of $\ell_u$ as received at
$r_v$.  If a set $S$ of links transmits simultaneously,
%where the interference (=signal) from link $\ell_u$ on $\ell_v$ is $G_{uv}$.
then the SINR at $\ell_v$ is
\begin{equation}
 \text{SINR}_v := \frac{P_v G_{vv}}{N + \sum_{u \in S} P_v G_{uv}}\ ,
\label{eqn:sinr}
\end{equation}
where $P_v$ is the power used by the sender $s_v$ of $\ell_v$, and $N$
is the ambient noise.  In the \emph{thresholded} SINR model, the
transmission of $\ell_v$ is \emph{successful} iff $\text{SINR}_v \ge
\beta$, where $\beta$ is a hardware-dependent constant.

% SINR path loss
%The standard assumption is that we are given points in a metric space, most commonly on the Euclidean plane.
The common assumption of \emph{geometric path loss} in SINR models states
that the gain is inversely proportional to a fixed polynomial of the
distance traveled, \emph{i.e.}, 
%$G_{uv} = 1 / d(s_u, r_v)^\alpha$, %% should be consistent with the later path-loss notation, right? --foley
$G_{uv} = d(s_u, r_v)^{-\alpha}$, where the range of the \emph{path loss constant}
$\alpha$ is normally between 1 and 6. The geometric path loss assumption
is valid in free space; we have $\alpha=2$ in perfect vacuum.

The {\decaymodel} model refers to the SINR formula (\ref{eqn:sinr}) applied to a
general \emph{gain matrix} $G$ obtained through pairwise RSS measurements.

\subsection{Experimental Setup}

\begin{figure}[bh!]
\centering
\subfloat{
\includegraphics[width=0.48\columnwidth]{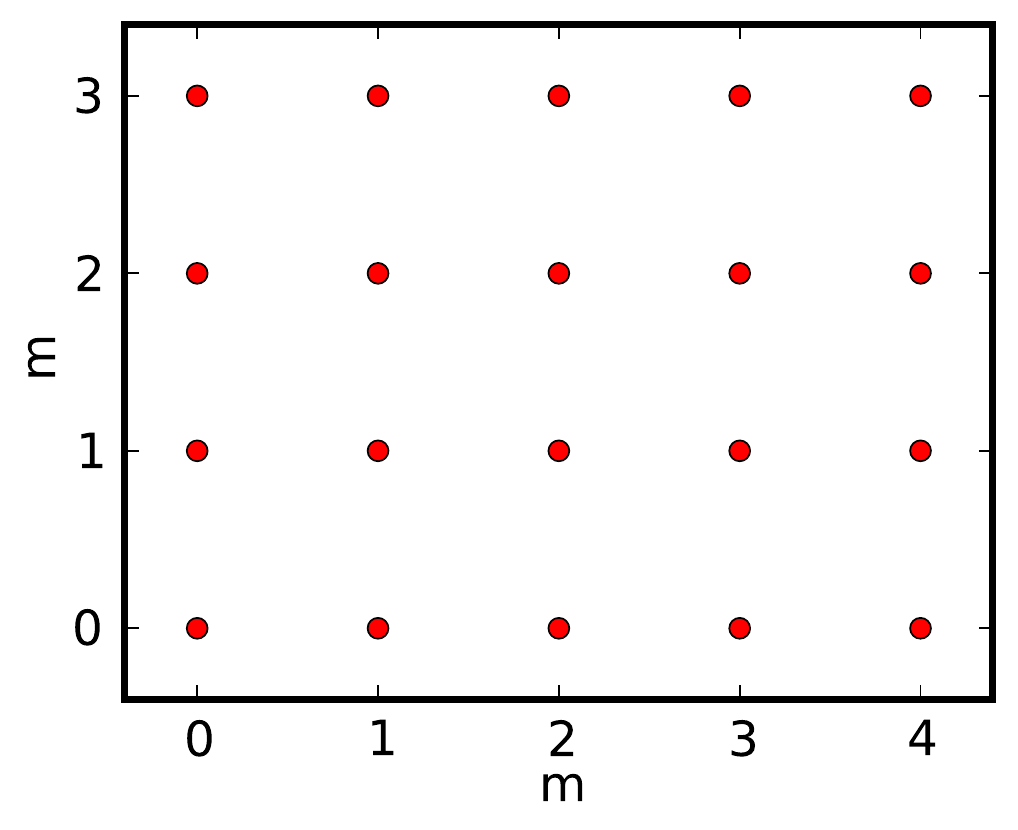}
%\label{fig_m110_testbed}
}
\subfloat{
\includegraphics[width=0.48\columnwidth]{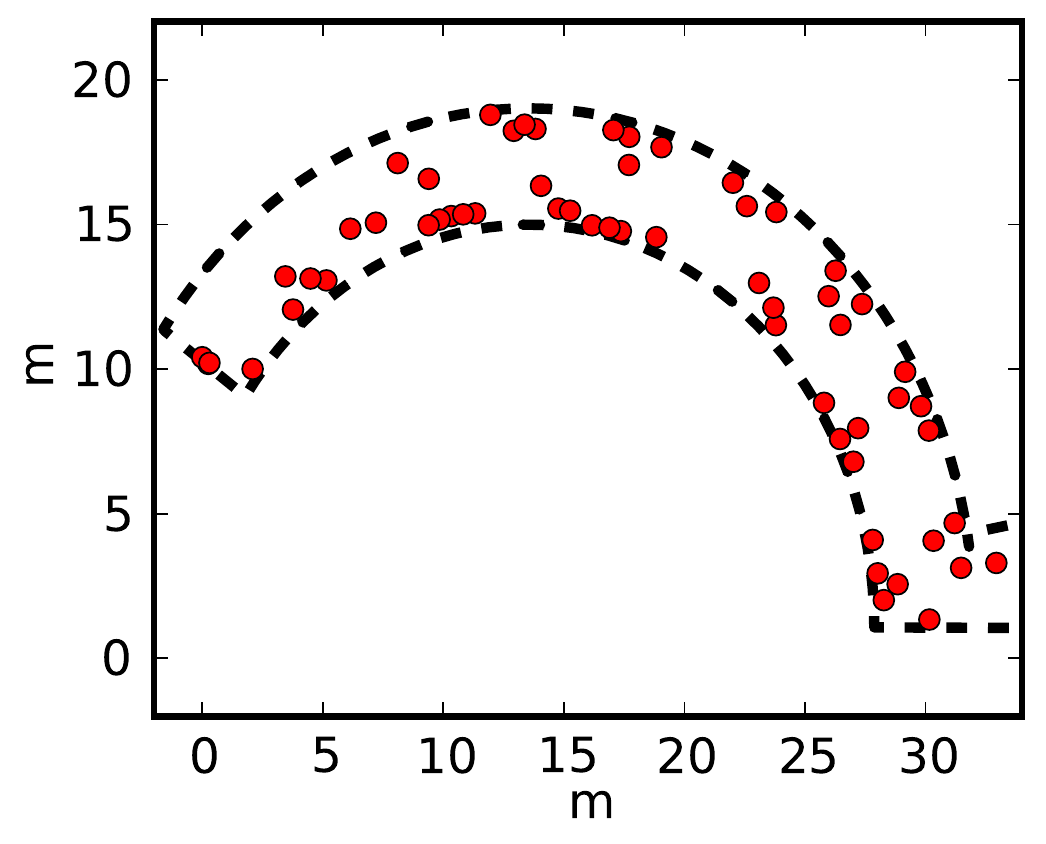}
%\label{fig_basement_testbed}
}
\caption{Topologies of our 20-node testbed (\classroom{}) \textbf{(left)} and 60-node curved corridor testbed (\basement{}) \textbf{(right)}.} % of wireless motes.}
\label{fig_testbed}
\end{figure}

%% Our platform
\subsubsection{Wireless hardware}
Since a motivating goal of our study is to understand raw interference between wireless transceivers, 
we elected to operate at the physical-layer of a wireless device.
We needed a mote with granular control over MAC-level capabilities, such as power and frequency control,
over one tailored to specific protocol stacks, such as the 802.11 suite.
For example, we require the ability to disable low-level features such as clear channel assessment (CCA).

We chose the Pololu Wixel, a development board
for the TI CC2511F32 \cite{ti_cc2511f32_datasheet}, as our mote hardware platform.
The CC2511F32 is an 8051 micro controller SystemOnChip with integrated 2.4~\giga\hertz{}
FM-transceiver stage (CC2500).
In addition to meeting our functional requirements, the Wixels are
inexpensive (14--20 USD), enabling larger scale deployments.

%Our nominal channel data rate is 350 kbps with a bandwidth of 600kHz.
%The Wixel radios can operate on various frequency channels in the
%2.4~\GHz{} ISM band.  We configured the
%Wixel to a single center frequency for all experiments
%after verifying minimal interference (particularly 802.11) at
%All packets are of the same length
%that frequency using a Metageek Wi-Spy 2.4x dongle.
%As noted before, the choice of a single frequency was to intentionally create a network where inter-mote interference will be observed.

% MDMCFG1.CHANSPC_E = 11 : Channel spacing exponent.
% MDMCFG0.CHANSPC_M = 0x87 : Channel spacing mantissa.
% Channel spacing = (256 + CHANSPC_M)*2^(CHANSPC_E) * f_ref / 2^18
% So the center of channel 255 is
%    2403.47 + 255 * ((256 + 0x87)*2^(3) * 24/2^18) = 2476.50 MHz

\begin{figure}[t!]
\centering%
%\subfloat[\decaymodel{}]{%
%\includegraphics[width=0.70\columnwidth]{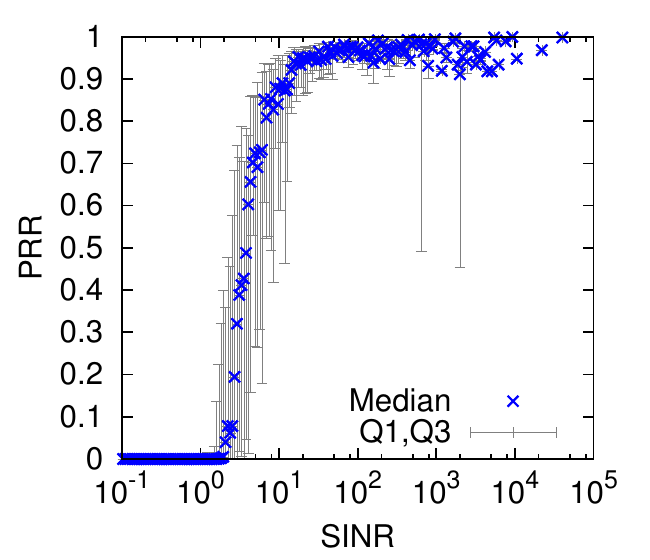}%
%\label{fig_decay_sinr}%
%}%
%\subfloat[\geomodel{}]{%
%\includegraphics[width=0.70\columnwidth]{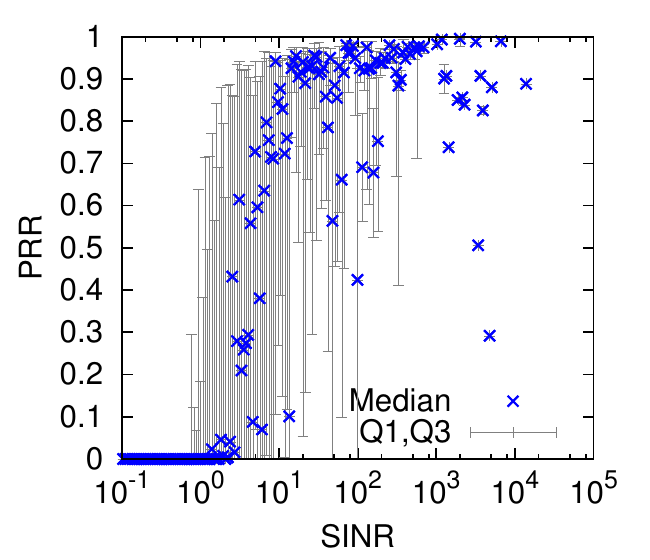}%
%\label{fig_geo_sinr}%
%}%
\subfloat[\RSS matrix.]{%
\includegraphics[width=0.48\columnwidth]{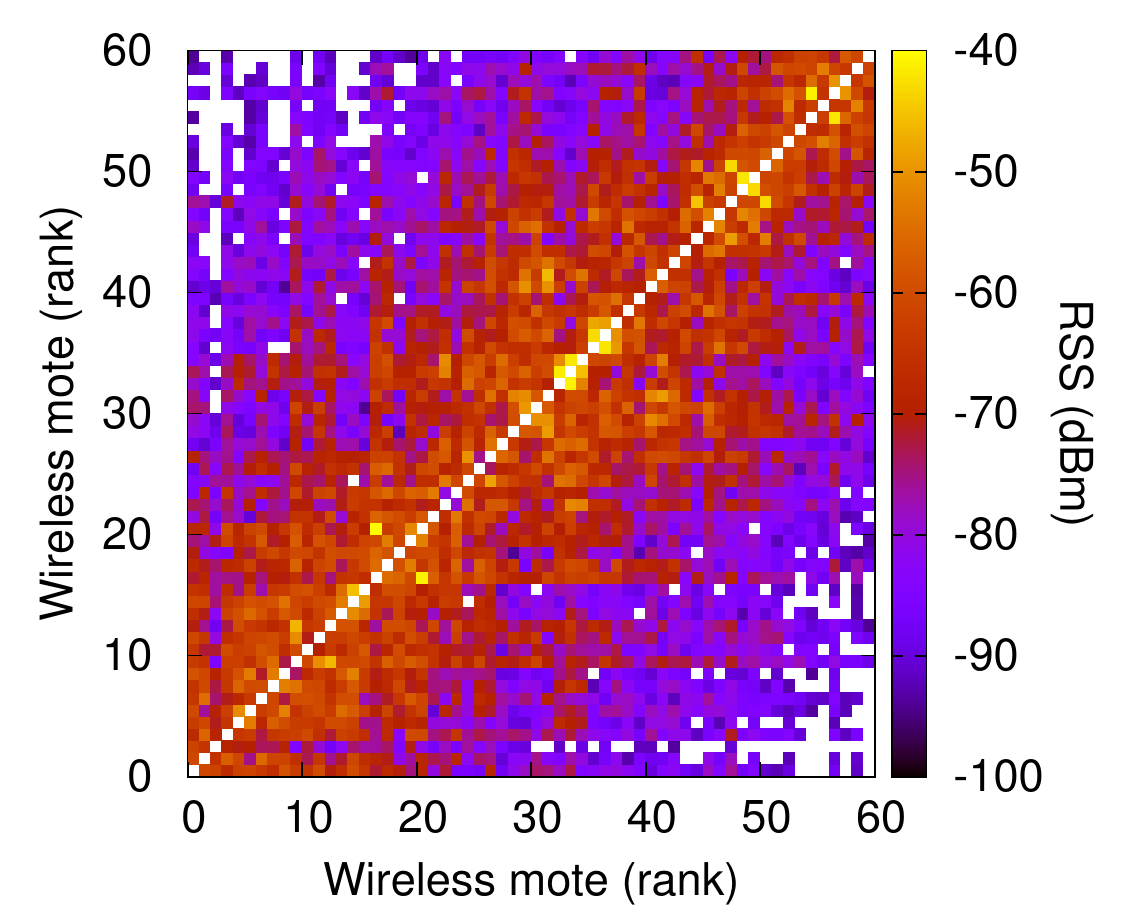}%
\label{fig_gains}%
}%
%%\subfloat[Evaluating path loss constant.]{%
%%\includegraphics[width=0.65\columnwidth]{figures/alpha}%
%%\label{fig_alpha}%
%%}%
%%\subfloat[Comparing \RSS and geometric path loss ($d^{-\alpha}$).]{
\subfloat[Comparing \RSS and geometric path loss ($d^{-2.18}$).]{
\includegraphics[width=0.48\columnwidth]{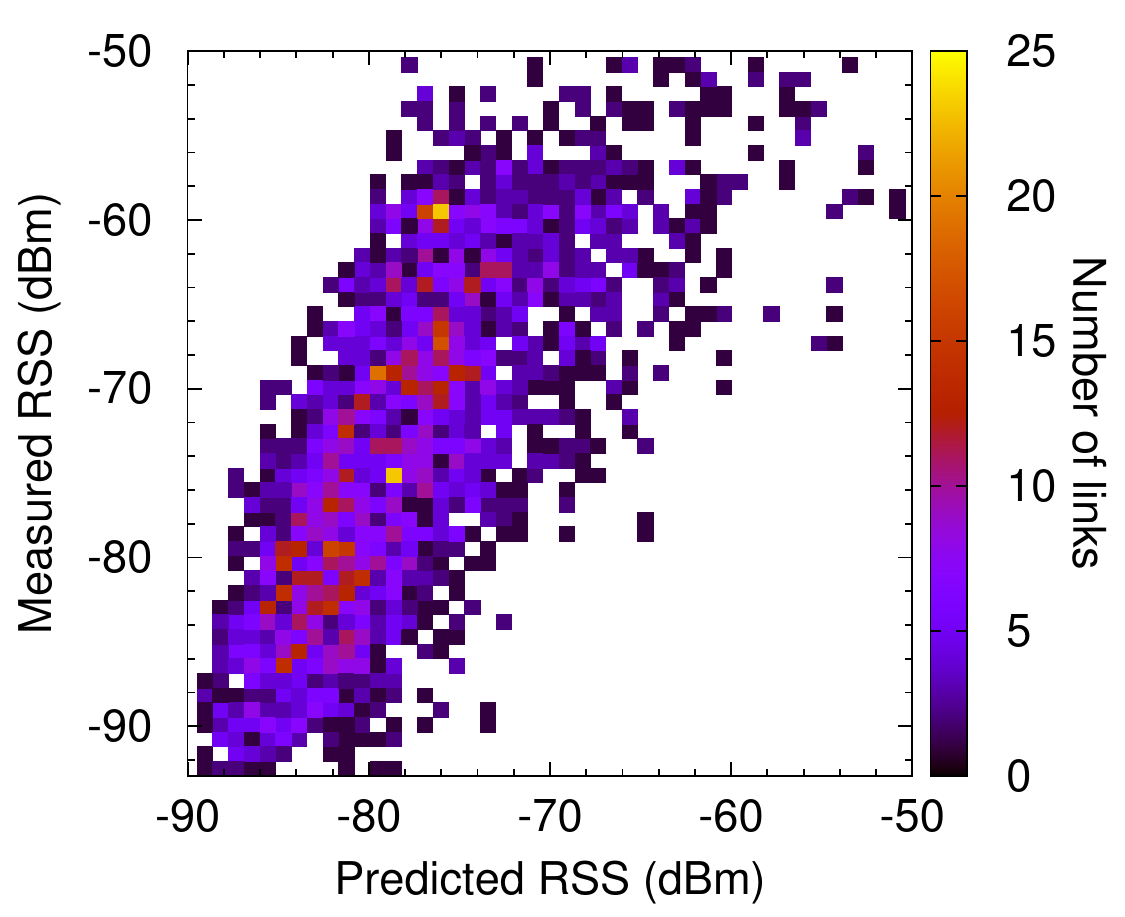}%
\label{fig_correlation}%
}
\label{fig_alphagains}
\caption{
\textbf{(a)} \textbf{\RSS matrix.} Gain between directed pairs of nodes in \basement{}, measured by RSS and averaged over 1000 packets.
%%\textbf{(b)} \textbf{Estimating the $\boldsymbol{\alpha}$ path loss constant.} The median received signal strength (RSS) decreases with distance from the transmitter.
%%The striped line shows the function $d^{-2.18}$, where the $\alpha$-exponent was obtained via least-squares linear regression.
\textbf{(b)} \textbf{Comparing measured and predicted \RSS.} Correlation between \RSS as predicted by distance with geometric path loss ($\alpha=2.18$) and measured in \basement{} testbed.
}
\end{figure}

\subsubsection{Configuration}
In our experiments, every sender node in each trial transmits a burst of 1000 packets
with 4~\milli\second{} delay between consecutive packets to facilitate successful delivery to the receiver.
The length of each packet is 22 bytes, including a 16-bit CRC.
Only packets that pass a CRC check are considered successful transmissions, 
with all error correction capabilities on the mote disabled.
The radio is configured to use data whitening and Minimum Shift Keying (MSK) modulation format.
During experiments, the wireless motes report details about packets
sent or received to an auxiliary log via USB which also provides
control signals and power for the experiments.
Packet details include the received signal strength (RSS) as an integer in \dBm.

% A packet is preceded by 8 preamble bytes and a two-byte synchronization word.
% The radio is configured to use Minimum Shift Keying (MSK) modulation format.
% Manchester coding is disabled.
% Data whitening is enabled.
% Forward Error Correction is disabled.

\subsubsection{\classroom{} testbed} In the first testbed, we arranged 20 wireless motes on an $4$
$\times$ $5$ grid with 1~\meter{} spacing in an empty classroom. % as shown in Fig.~\ref{fig_m110_testbed_img}.
The motes in the grid were mounted on wooden poles 1~\meter{} from the ground
in order to minimize
reflection and attenuation from the ground; see Fig.~\ref{fig_testbed} for the topology.
The testbed was deployed temporarily for a focused set of tests.

\subsubsection{\basement{} testbed} In the second testbed, we suspended 60
wireless motes about 0.3~\meter{} from metal wire trays and 2.5~\meter{} from the concrete floor in a curved basement corridor;
see Fig.~\ref{fig_testbed} for the topology.
The corridor provides a challenging environment:
limited line of sight between motes, obstacles such as water pipes and thick
electric cables, and reinforced concrete walls.
Approximately 94\% of the directed links are in range for communication.
The length of the corridor is 40.1~\meter{}, 
the longest distance (direct line) between any two motes is 21.8~\meter{} while the shortest distance is 0.4~\meter{}.
The \basement{} testbed is a more permanent setup, with the experiments conducted over the span of several weeks.

\subsection{Model calibration}
We ran several experiments on the testbeds to gather calibration data for the \decaymodel{} and \geomodel{} models.
The figures with error bars show the median, and upper and lower quartiles of the distribution of experimental trials.

\subsubsection{Ambient noise parameter $\boldsymbol{N}$}
We evaluated $N$ by sampling the noise level registered by each mote, 
over several hours in both early morning and during nighttime.
%We observed minimal external interference at 2.44~\giga\hertz{}.
%All of our experiments use that frequency unless otherwise stated.
All of our experiments use the 2.44~\giga\hertz{} frequency unless otherwise stated.
We found the average ambient noise in \basement{} to be around -99.1~\dBm, but considerably
higher in \classroom{} at around -94.4~\dBm,
in part due to external interference from 802.11 infrastructure. %large-scale managed
%Although the two testbeds are in the same building, \basement{} is much more shielded against
%the 802.11 infrastructure, although there are other potential sources for noise, such
%as GSM wiring on the rails that are used to support the wireless motes.

% 1M readings in basement, 250K in m110
% Basement Average noise:   1.21416328391e-10~\milli\watt{}  Standard deviation: 6.30332683099e-11
% M110 Average noise:       3.61550624424e-10~\milli\watt{} Standard deviation: 2.17996608618e-09
% Basement: Lowest recorded reading: 3.16227766017e-12 (-115~\dBm)
% Basement: Highest recorded reading: 3.98107170553e-08 (-74~\dBm)

\subsubsection{Power setting $\boldsymbol{P}$}
%The Wixel mote has several documented settings for transmission power, in addition to a wide range of undocumented power settings.
%We set the transmission power of the wireless mote to 1~\mw  (according to the Wixel datasheet) in all of our experiments,
We configured the wireless mote's transmission power to 1~\mw in all of our experiments,
and normalize the $P$ parameter as 1.

\subsubsection{\RSS matrix} 
We measured the RSS for all directed node pairs $(s_v,r_v)$ in both testbeds.
In each time slot, a chosen node transmits 1000 packets in a sequence, while
other nodes act as receivers. %and report the RSS of successfully received packets.
The procedure was repeated for each pair of directed nodes.
For temporal robustness, including day and night variations, the experiments were repeated at 
different times of the day.
%For each pair of nodes $(s_v,r_v)$, the measured gain $G_{vv}$ is the average received signal strength over packets transmitted from $s_v$ to $r_v$.

Fig.~\ref{fig_gains} illustrates the $\RSS_{vv}$ for all node pairs $(s_v,r_v)$ in testbed \basement{}.
The motes on both axes are ordered by the angle of their polar coordinates due to the arced positioning of the corridor,
thus making neighboring motes in the testbed likely to be adjacent in Fig.~\ref{fig_gains}.
The figure further demonstrates that every mote can hear some other mote in the testbed, 
that some mote pairs cannot communicate, and that transmissions are not fully symmetric.
%The gain, or the RSS values, show how motes that are close in reality usually have a high RSS, while nodes that are far apart have low RSS, and in some cases, the nodes cannot hear each other.

\subsubsection{Path loss constant $\boldsymbol{\alpha}$}
%The geometric SINR model captures the attenuation of a wireless signal from sender $s_v$ to
%receiver $r_v$ over distance by assuming the RSS
%is proportional to $d(s_v,r_v)^{-\alpha}$, 
%where $\alpha$ is a path loss parameter.
%Fig.~\ref{fig_alpha} shows the estimation of $\alpha$.
%Given link lengths and the RSS values from Fig.~\ref{fig_gains}, 
Assuming geometric signal decay, 
the best linear least-squares fit for the path loss constant $\alpha$ given link lengths and the RSS values from Fig.~\ref{fig_gains} was $\alpha = 2.18 \pm 0.07$.
%
%Fig.~\ref{fig_correlation} shows the correlation between the predicted RSS and the measured RSS.
%The predicted RSS values are based on 
By using geometric path loss $d(s_v,r_v)^{-2.18}$ % with the fitted $\alpha$-value of $2.18$.
to predict RSS values, we plot the correlation between the predicted RSS and measured RSS values in Fig.~\ref{fig_correlation}.
If the prediction were perfect, the points would fall on the $y=x$ diagonal line.
Instead, the results in Fig.~\ref{fig_correlation} confirm that 
geometric path loss is not a reliable predictor for RSS.

\begin{figure}[t]


\centering%
\subfloat[\decaymodel{}]{%
\includegraphics[width=0.50\columnwidth]{figures/sinr-decay}%
\label{fig_decay_sinr}%
}%
\subfloat[\geomodel{}]{%
\includegraphics[width=0.50\columnwidth]{figures/sinr-geo}%
\label{fig_geo_sinr}%
}%
\label{fig_sinrmodels}
\caption{
\textbf{Packet reception rate for (a) \decaymodel{} (b) \geomodel{}.}
Fraction of packets correctly decoded by receivers in \basement{} as the SINR 
is varied by evaluating different pairs, possibly invoking multiple senders.
The plots show a transition from 0 to about 100\% PRR as the SINR grows.
}
\end{figure}

\subsection{Comparison of {\textsc{\scalefont{1.3}geo-sinr}} and \textsc{\scalefont{1.3}mb-sinr}}
We next evaluate the predictive power of the two models in experiments with varying interference.

\subsubsection{Controlling wireless interference in practice}
One of the challenges with hardware experiments is the synchronization of the wireless motes. 
Our focus on measuring interference
requires us to ensure that interfering motes are transmitting at the same time as the sender.
Although we investigate sets of links that are transmitting at the same time, 
our analysis is focused on the performance of individual links.
We therefore circumvent the problem of synchronizing the motes by running our experiments for each individual link in the link set.
To analyze how a single link $\ell_v$ would perform in the presence of the other sender-receiver pairs in a set $S$,
we make the links in $S \setminus \{\ell_v\}$ transmit continuously while we measure the transmission of $\ell_v$.
The continuous transmission by other senders ensures that the receiver in link $\ell_v$ experiences interference from other links.

\subsubsection{Experimental design}
With the synchronization issues in mind, we devised an experiment to compare
the predictive power of the two models.
We repeatedly select a random pair of nodes to act as sender and receiver, and
a subset of 1--10 other nodes to cause interference.
During the trial, the interfering nodes continuously transmit packets on the same frequency.
We deploy low-level packet filtering at the receiver to minimize processing overhead due to interfering packets.

\begin{figure}[t]
\centering%
\subfloat[Additivity in \decaymodel{}]{%
\includegraphics[width=0.50\columnwidth]{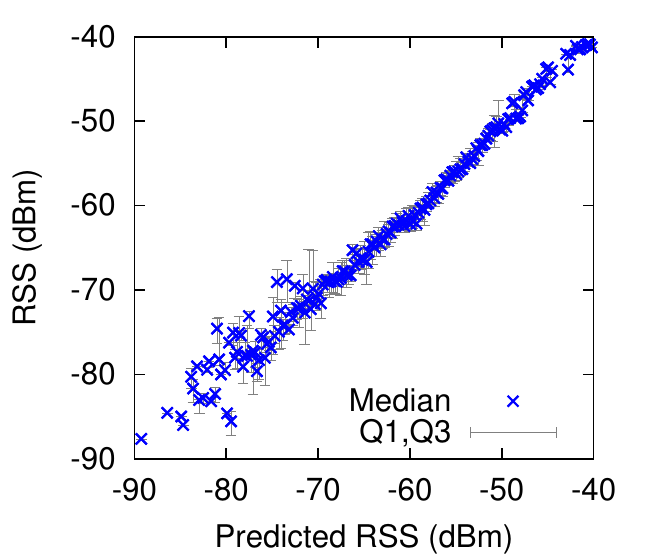}%
\label{fig_add_decay}%
}%
\subfloat[Additivity in \geomodel{}]{%
\includegraphics[width=0.50\columnwidth]{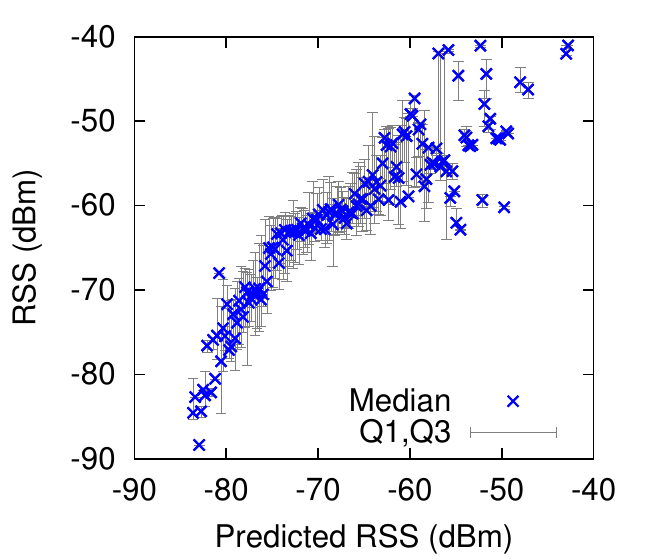}%
\label{fig_add_geo}%
}%
\label{fig_additivity}
\caption{
\textbf{Additivity in (a) \decaymodel{} and (b) \geomodel{}.} Correlation between predicted and measured (RSS) in \basement{}.
}
\end{figure}

\subsubsection{Packet reception rate by model}
%A successful transmission of a packet depends on the signal strength of the transmission, 
%ambient noise and the interference from other nodes.
%The SINR model uses the ratio of the received signal over the sum of ambient noise and the cumulative signal strength of other senders.
%If this ratio is above some threshold the transmission is considered successful, 
%otherwise it is assumed that the transmission has failed.
%
For every possible link in each testbed, 
we generated over 15,000 packet transmission trials over the link and measured the packet reception rate as a function of the SINR as calculated by the two models.
Fig.~\ref{fig_decay_sinr} shows the \PRR as a function of the SINR in \decaymodel{}, calculated using the RSS matrix shown in Fig.~\ref{fig_gains}.
Corresponding
results for \geomodel{} where the SINR is based on distances between nodes are shown in Fig.~\ref{fig_geo_sinr}.

The \decaymodel{} behaves as expected:
generally the \PRR and SINR values are either both small or both large.
There is a swift transition from low to high \PRR as the SINR value increases. % to about 100\%.
We note that occasional trials produce a small \PRR{} value despite SINR being large,
as indicated by the two large error bars where SINR $\approx 10^3$.
These outliers stem from occasionally no packets being received even for a large SINR, likely %which may be
due to details of the testbed topology,
such as destructive interference caused by signal reflection.
Contrary to the \geomodel{} model, the \decaymodel{} model has a discernible threshold for successful transmissions. 

%The \geomodel{} model does not show the same clear behavior in our experiments. The same general
%relationship between SINR and \PRR manifests itself, but the model does not exhibit the same well-defined threshold as that of \decaymodel{}.
%As , the predictive power of the SINR value is significantly smaller for \geomodel{} than for \decaymodel{}.
%To further evaluate the differences, we compare the predictive accuracy of the two models.

\subsubsection{Additivity of interference}
Among the assumptions made by the SINR model is that interference is additive.
In other words, if multiple senders transmit simultaneously, the RSS at the receiver can be estimated as the sum of the individual signals.
Fig.~\ref{fig_add_decay} and \ref{fig_add_geo} show the actual
RSS as a function of the predicted RSS as given by \decaymodel{} (Fig.~\ref{fig_add_decay})
and \geomodel{} (Fig.~\ref{fig_add_geo}).  
We note that the variability evident in the measured RSS arises due to sparsity of data in those regions.
If the additivity assumption is true, we would expect the values in the figures to fall on the diagonal line $y = x$.  

The \geomodel{} appears more closely described by a pair of line segments with
different slopes than a linear fit. 
Using linear regression, the coefficient of variation between the axes is low ($r^2 \simeq 0.031$), implying a low goodness-of-fit.
Conversely, the \decaymodel{} model has a strong linear trend,
with linear regression to the diagonal line incurring only $3.2\%$ error and producing a large coefficient of variation ($r^2 \simeq 0.968$)
between the predicted and measured RSS.
The \decaymodel{} therefore more closely captures the additivity of interference than the canonical \geomodel{} model.
%We that the \decaymodel{} model is superior in capturing the additivity of interference than the canonical \geomodel{}.

\subsubsection{Sensitivity and specificity analysis}
The \geomodel{} and \decaymodel{} models can be viewed as binary classifiers that compare the SINR to a threshold ($\beta$)
to determine whether or not a transmission will be successful.
We say that a transmission is experimentally successful if $\PRR \geq T_{high}$ and declare it to be a failure if $\PRR \leq T_{low}$.
We focus on those links that were clearly either feasible or infeasible in our experiments, and set $T_{high} = 0.8$ and $T_{low} = 0.2$.
Roughly 6\% of the tested links fall within the $0.2-0.8$ range and are thus not considered.
A single instance in an SINR binary classifier can have four outcomes:
\newcommand{\TP}{\text{TP}}
\newcommand{\FP}{\text{FP}}
\newcommand{\TN}{\text{TN}}
\newcommand{\FN}{\text{FN}}
\begin{myitemize}
\item True positive (\TP): $\text{SINR} \geq \beta$, $\PRR \geq T_{high}$ % success predicted, trial successful.
\item False positive (\FP): $\text{SINR} < \beta$, $\PRR \geq T_{high}$ % failure predicted, trial successful.
\item True negative (\TN): $\text{SINR} < \beta$, $\PRR < T_{low}$ % failure predicted, trial fails.
\item False negative (\FN): $\text{SINR} \geq \beta$, $\PRR < T_{low}$ % success predicted, trial fails.
\end{myitemize}
A binary classifier incurs an inherent trade-off between true positive rate (sensitivity), defined as $\frac{\TP}{\TP + \FN}$, and false positive rate (1-specificity), defined as $1-\frac{\TN}{\FP+\TN} = \frac{\FP}{\FP+\TN}$. 
The trade-off balance can normally be tuned by a threshold parameter of the classifier, in this case $\beta$.
By varying $\beta$, the trade-off can be graphically depicted on a ROC-curve (Receiver Operating Characteristic) that shows true and false positive rates on two axes.
If $\beta = 0$, the classifier predicts that all transmissions will be successful,
and if $\beta$ is large, the classifier predicts that all transmissions will fail.
A na\"ive classifier making uniformly random guesses would fall on the diagonal line from $(0,0)$ to $(1,1)$, whereas the $(0,1)$ point denotes perfect classification.

\begin{figure}[t]
\centering%
\subfloat[ROC-curve for \basement{}]{%
\includegraphics[width=0.5\columnwidth]{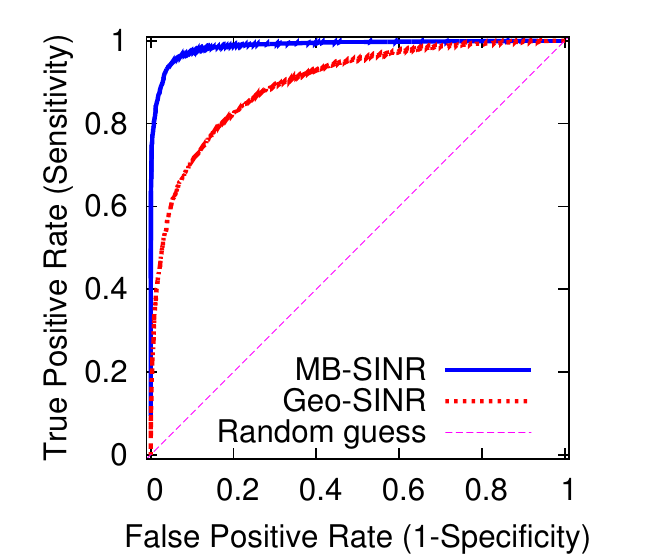}%
\label{fig_roc_basement}%
}%
\subfloat[ROC-curve for \classroom{}]{%
\includegraphics[width=0.5\columnwidth]{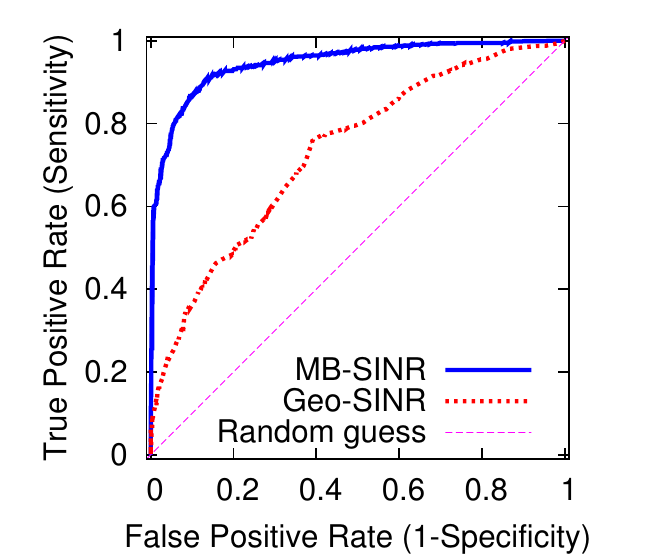}%
\label{fig_roc_m110}%
}%
\label{fig_rocs}%
\caption{
\textbf{ROC-curves for (a) \basement{} and (b) \classroom{}.} Comparison of \decaymodel{} and \geomodel{} as estimators
for successful transmission of packets as the acceptance threshold $\beta$ is varied.
Each trial consists of 1000 packets exchanged in \basement{}. A positive trial outcome has $\PRR \geq80\%$ whereas a
negative one has $\PRR \leq20\%$.
}
\end{figure} 

Fig.~\ref{fig_roc_basement} and \ref{fig_roc_m110} show the ROC-curves for the \basement{} and \classroom{} testbeds, respectively.
\decaymodel{} provides significantly better classification than \geomodel{}.
In \basement{}, the best trade-off between true and false positive rates occurs when $\beta = 2.15$,
with a true positive rate of 94.8\% and false positive rate of 5.2\%.

In contrast, the predictions made by the canonical \geomodel{} on the same testbed plateaus at true positive rates of 81.4\% and false positive rates of 18.6\%.
Both models give less accurate predictions for \classroom{} compared to \basement{}.
The topology for \classroom{} is more compact than \basement{}, with fewer number of trials.
The larger and more variable ambient noise in \classroom{} makes it more difficult to accurately predict the outcome of a transmission
than in the \basement{} testbed.

%\subsubsection{Takeaways.}
As expected, \decaymodel{} provides significantly superior predictive power for PRR than \geomodel{} on both testbeds.
Our results are robust against modifying the thresholds to $T_{high},T_{low} = 0.5$. 
Note that \RSS measurements used to compute SINR in \decaymodel{} were performed weeks in advance of these experiments.
This suggest that the \RSS matrix
is resilient to temporal factors, with \decaymodel{} correctly predicting nearly 95\% of all instances.

\begin{figure*}[t]
\centering%
\subfloat[$\zeta_v$ in \classroom{} and \basement{}]{%
\includegraphics[width=0.5\columnwidth]{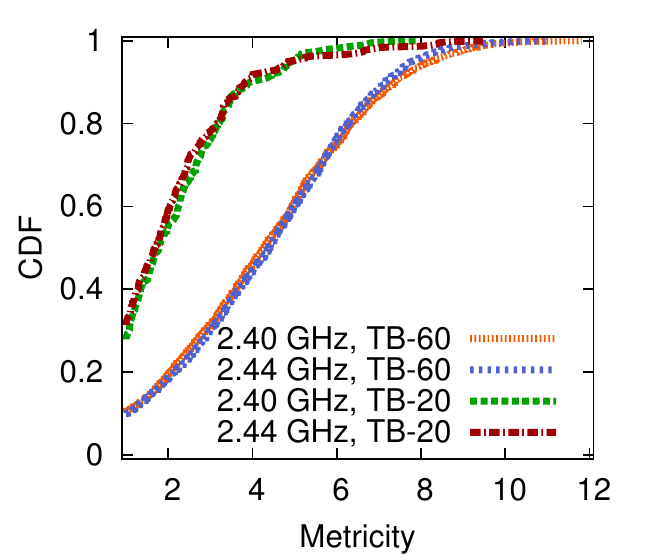}%
\label{fig_zeta}%
}%
\subfloat[$\zeta_v$ of different subsets of motes in \basement{}]{%
\includegraphics[width=0.5\columnwidth]{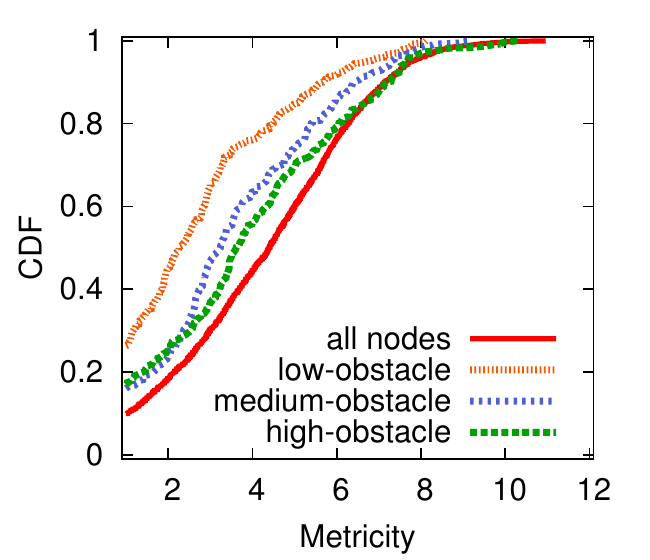}%
\label{fig_zeta_groups}%
}%
\subfloat[$\zeta$ of subsets of motes with increasing number of nodes]{%
\includegraphics[width=0.5\columnwidth]{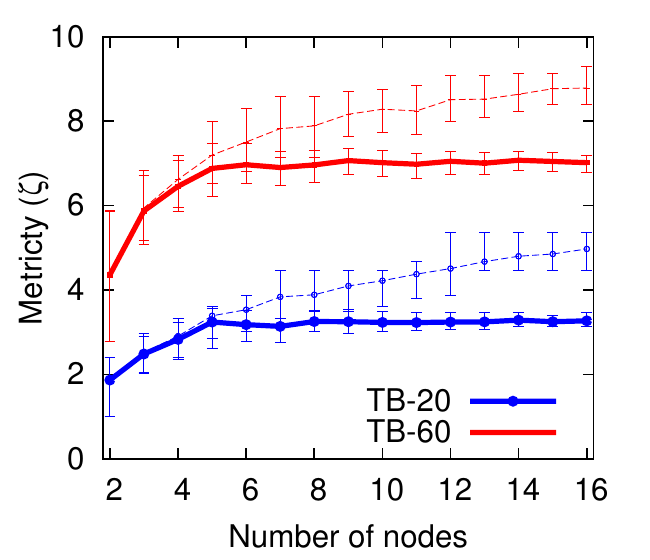}%
\label{fig_zeta_size}%
}%
\subfloat[$\zeta$ of subsets with increasing distance between nodes]{%
\includegraphics[width=0.5\columnwidth]{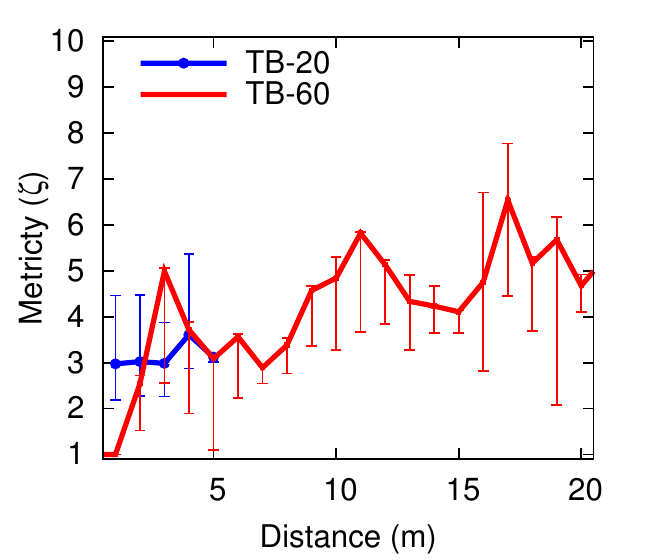}%
\label{fig_zeta_distance}%
}%
\label{fig_zetas}%
\caption{
\textbf{(a)} \textbf{Metricity ($\boldsymbol\zeta_v$)}.
CDF comparison of computed $\zeta_v$ values for RSS matrices on both testbeds.
%The thick lines represent values computed for the $\widehat{\text{RSS}}$ matrices.
%The dashed lines show $\zeta$ values for the 2.44~\giga{}\hertz{} frequency, which is representative of other frequencies in the respective testbeds.
The lines show $\zeta_v$ values for two different frequencies, which are representative of other frequencies in the respective testbeds.
\textbf{(b)} \textbf{Environmental factors}.
CDF comparison of computed $\zeta_v$ values on three different subsets of nodes. \textit{low-obstacle} has the fewest environmental obstacles; \textit{high-obstacle} has the most.
\textbf{(c)} \textbf{Impact of scale on $\boldsymbol\zeta$}.
 Metricity compared to the number of nodes calculated as the average metricity of 200 randomly generated subsets of nodes of size at most 16, calculated for both testbeds. 
 The thick lines represent the average of the $95^{\text{th}}$ percentile $\zeta$ of subsets of the same size, whereas the thin lines represent the average maximum $\zeta$ of the subsets of the same size.
 The error bars represent the inter-quartile range.
\textbf{(d)} \textbf{Influence of pairwise distances on $\boldsymbol\zeta$}.
 Metricity calculated on subsets of nodes with similar distances on both testbeds.
 The lines represent the $95^{\text{th}}$ percentile of the $\zeta$ value of each set.
 The error bars represent the interval between the highest value of $\zeta$ and the median.
}
\end{figure*}

%\section{Modeling arbitrary path-loss with metrictiy}\label{sec:metricity}
\section{Metricity}\label{sec:metricity}

A plethora of important wireless interference algorithms rely on the {\geomodel} model,
many of which have no obvious generalization to arbitrary metrics.
Moreover, several problems in the domain, such as finding the maximum set of links that can simultaneously transmit
(the {\capacity} problem), have been proved to be computationally hard in an
unconstrained SINR model \cite{GHWW09}.

% While the \decaymodel{} provides more accurate empirical predictions
% than the \geomodel{} model, the simplicity of the latter model has
% appealed to algorithm designers.  
%In the previous section we advocated for embedding pairwise RSS numbers
%in the model, but 
To facilitate algorithmic analysis under the more realistic \decaymodel{} model, we
introduce a \emph{metricity} parameter $\zeta$ that reflects how well
signal decay resembles a metric space.

In what follows, we assume arbitrary path loss with gain $G_{uv} = 1/f(s_u, r_v)$ for some
function $f$ of pairs of points. Note that the RSS between sender $s_u$ to receiver $r_v$ is $P_u G_{uv}$.

\subsubsection{Definition}
% Zeta 
%\begin{definition}
The \emph{metricity} $\zeta(x,y)$ of a given node pair $(x,y)$ in gain matrix $G$ is
defined to be the smallest number satisfying for any mote $z$ with links also in $G$,
\begin{equation}
 f(x, y)^{1/\zeta(x,y)} \le f(x,z)^{1/\zeta(x,y)} + f(z,y)^{1/\zeta(x,y)}\ .
\label{eq:zeta}
\end{equation}
%\end{definition}
%\footnote{Motivate this definition.}
%\smallskip
For notational simplicity we use $\zeta_v = \zeta(s_v, r_v)$.

\subsubsection{Definition}
We define $\zeta$ as the maximum value over all $\zeta_{v}$ in $G$ unless specified otherwise.
$\zeta$ is well defined, namely, consider $\zeta =
\zeta_0 := \log_2 (f_{max}/f_{min})$, where $f_{max} = \max_{x,y}
f(x,y)$ and $f_{min} = \min_{x,y} f(x,y)$.  Then, we can see that the
LHS of (\ref{eq:zeta}) is at most $f_{max}^{1/\zeta_0} = 2
f_{min}^{1/\zeta_0}$, while the RHS is at least that value.
%%% Note that $\zeta$ is well-defined, since $\zeta \le \max_{x,y} \lg f(x,y)$.  
In the case of geometric path loss, $\zeta \leq \alpha$, since
$f(x,y) = d(x,y)^\alpha$ and the distance function $d(x,y)$ satisfies
the ordinary triangle inequality.

\subsubsection{Theoretical implications}
The $\zeta$ parameter has the advantage that theoretical results 
in the {\geomodel} can be imported \emph{without significant changes} to the {\decaymodel}.
Specifically, the following is true.
\begin{quote}
   All results that hold for general metrics in the {\geomodel} carry over to the {\decaymodel} with trivial modifications, giving practically identical performance ratios in terms of $\zeta$ as the original result had in terms of $\alpha$.
\end{quote}
Results that were proven for general metric spaces therefore do not depend on
the particular value of the path loss constant $\alpha$ or that the value holds homogeneously. 
The results rely upon the triangle inequality, for which
Eqn.~\ref{eq:zeta} applies equally well in arbitrary gain matrices.

% Capacity results that carry over
Taking the {\capacity} problem as an example, approximation results
carry over for numerous cases: fixed power \cite{SODA11}; arbitrary
power control \cite{KesselheimSODA11,KesselheimESA12}; distributed
setting based on regret minimization \cite{infocom11} and under
jamming \cite{dams2014jamming}; online setting \cite{GHKSV13}, as well
as the weighted version with linear power \cite{us:Infocom12}.

% Further examination in the PODC'14 paper
In a sibling paper \cite{us:podc14}, we have examined in detail how
exactly the current body of analytic results carries over to the
{\decaymodel}.  With only few exceptions, results that have been
derived specifically for the Euclidean plane also hold in the {\decaymodel}
model, making only an elementary assumption about the convergence of interference:
that the collective interference of uniformly distributed nodes does not tend to
infinity.

% Conclusions
A consequence of these translations between models is that previous theoretical work
in the {\geomodel} model is in fact highly robust to spatial signal variability.

\if
However, results that are tied to the planar setting (or doubling metrics), such as
various distributed algorithms for broadcasting \cite{} and connectivity \cite{}, 
We can generalize this, as stated in the introduction, to claim that any natural SINR algorithm that works in arbitrary metric space, works equally well in the decay model. 
By a natural algorithm we mean one that does not depend on the exact value of $\alpha$ or require it to be identical everywhere.
\fi

\subsection{Experimental evaluation}
\label{sec:zeta}

\subsubsection{Method and evaluation}
Using values obtained for the \RSS matrices in both testbeds, shown for \basement{} in Fig.~\ref{fig_gains},
we evaluate the minimal value of $\zeta_v$, (as defined in Eqn.~\ref{eq:zeta}) for every directed node pair within communication range.
%As stated above, these experiments were performed with all motes using the same frequency at 2.44~\giga\hertz{}.
%XXX HG: same frequency per experiment, but mostly use analysis of medians (see next section)
%        so perhaps we still need to mention the multiple frequencies here.
%difference between testbeds
The cumulative distribution function (CDF)
of the resulting $\zeta_v$ values is shown for two frequencies (2.40~\giga\hertz{} and 2.44~\giga\hertz{)} in Fig.~\ref{fig_zeta}.
%, with the dashed lines representing the 2.44~\giga\hertz{} frequency.
In both testbeds the values for $\zeta_v$ range up to 12. 
However, note that the metricity values in \basement{} are generally a bit larger than in \classroom{}. %, as can be seen in the CDF plot in Figure \ref{fig_zeta}.
The discrepancy is to be expected, since the challenging \basement{} environment both contains longer links with more variable signal strength as
well as more variable signal attenuation due to obstacles.

%percentiles
In both testbeds, we find that a small fraction of the links have comparatively higher values of $\zeta_v$,
as seen by the long tapering at the top in Fig.~\ref{fig_zeta}, which in turn drives the value for $\zeta$ to a relatively high number.
To quantify, we find that the $95^{\text{th}}$ and $99^{\text{th}}$ percentile of the $\zeta_v$ value of links are significantly smaller than the global maximum $\zeta$.
In \basement{} some of the weakest links are only able to communicate at particular frequencies.
These links correspond to pairs that have weak signals, and limited or no line-of-sight. 
In \classroom{}, while all pairs have line-of-sight and can communicate with one another, the large $\zeta$ value is an artifact of relatively high $\zeta_v$
values for only a handful of links. We investigate the effect further in Section \ref{sec:multichannel}.
% The \zeta of a set is (or has been) the max \zeta 
% of all of link pairs in that set, i.e. \zeta of the basement testbed is around 9.5. 
% We might argue that we should be using the 95, which would bring it down to around 7.

\subsubsection{Factors affecting $\zeta$} %better name for this? we're trying different hypotheses for what influences zeta
One of the drivers behind the definition of $\zeta$ is to measure the ``complexity'' of the environment.
To be more precise, we investigated the impact of obstacles in the environment, the number of nodes of a network and the distances between nodes
in order to better pinpoint what features most influence the $\zeta$ value.
%In Section \ref{sec:multichannel} we examine the influence of different frequencies on $\zeta$.

%environment vs. zeta
To examine the impact of different environmental characteristics, we divide the nodes in \basement{} into three subsets of 20 nodes in Fig.~\ref{fig_zeta_groups}.
The set of motes with the lowest values of $\zeta_v$ has the fewest number of obstacles (\textit{low-obstacle}), 
whereas the set with the higher $\zeta_v$ values (\textit{high-obstacle}) has a variety of barriers, such as electric cables suspended in the ceiling and greater distances between nodes.
The \textit{medium-obstacle} group has an average number of barriers while also having the most condensed topology.
We calculated the values for $\zeta_v$ on the different links in the induced gain matrix for each set to obtain Fig.~\ref{fig_zeta_groups}.
The figure suggests that the complexity of the environment, in the form of physical obstacles, might have a significant impact on the value of $\zeta$.
% Metricity is related to the _complexity_ of the environment (impact of different environmental characteristics)
% Interestingly, the sets all have similar \zetas (if we consider max, it's around 8.5), but their distributions are very different. 
% Group 1 is the 20 nodes that are closest to the entrance, group 2 is the middle part and group 3 are near World Class. 
% Group 1, by the entrance, has low values of \zeta and, if you remember the layout, it also has the least amount of obstacles. 
% Group 2 is probably the most concentrated in terms of distances, but there are more obstacles. 
% Finally, Group 3 has worse obstacles, longer distances and the curve in the corridor might be a factor for the nodes that are furthest apart.

%size of the set vs. zeta
We further examine the impact of the size of the set of motes on $\zeta$.
Using the \RSS{} matrices for \basement{} and \classroom{}, 
we calculate $\zeta$ for randomly generated subsets of different sizes.
Fig.~\ref{fig_zeta_size} shows metricity as a function of set size.
The thick lines represent $\zeta$ as the average $95^{\text{th}}$ percentile of the values for $\zeta_v$, which is relatively stable
with increased sizes for sets of five or more nodes.
The corresponding thin lines, which represent the average $\zeta$ as the highest value for $\zeta_v$ in each testbed,
demonstrate that the global maximum $\zeta$ continues to grow with the set size.
The increase is to be expected, as larger sets are more likely to include the links with the highest $\zeta_v$ values.
%The is to be expected, since the weakest links are more likely to selected with larger sets. %maybe more discussion here

%distance vs. zeta
We also looked into the relationship between node distances and the $\zeta$ value of a set.
We took different subsets of nodes that have roughly the same distances to one another, as shown in Fig.~\ref{fig_zeta_distance}.
The figure suggests that the distances between nodes are poorly correlated with the value of $\zeta$.

%The way to look at the first plot is that I've taken all links that have roughly the same distance to each other and put them in sets. Then, the lines represent the 95th percentile \zeta of each set. The error bars represent the peak (highest \zeta) and the median. Basically, \zeta is not very well correlated with distance (although one could possibly fit a line through there somehow). In the classroom, it's almost a straight line, with one spike.

\begin{figure}[t!]
\centering%
\subfloat[Metricity ($\zeta_v$) in \basement{}]{%
\includegraphics[width=0.5\columnwidth]{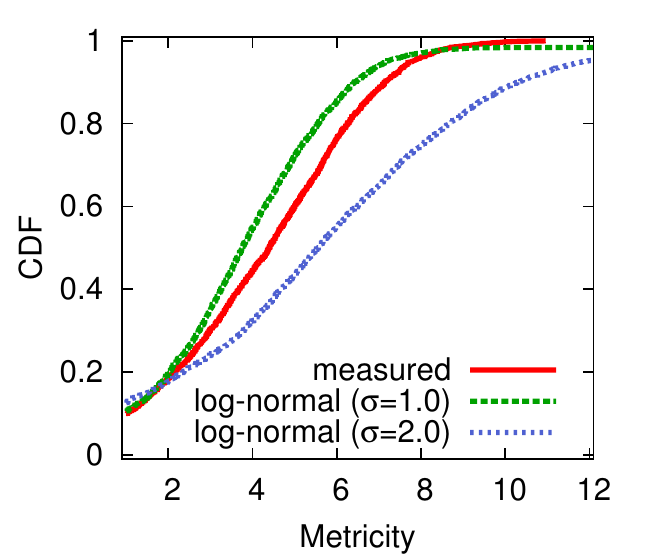}%
\label{fig_lognormal_cdf}%
}%
\subfloat[Metricity ($\zeta$) and standard deviation ($\sigma$)]{%
\includegraphics[width=0.5\columnwidth]{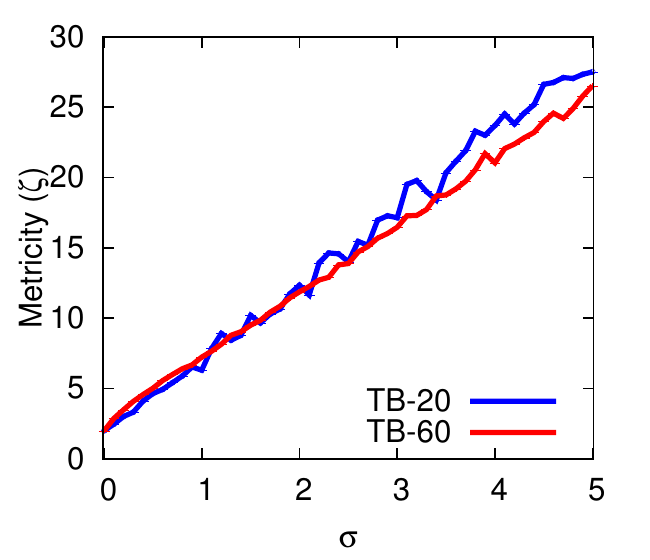}%
\label{fig_lognormal_variance}%
}%
\label{fig_lognormal}%
\caption{
\textbf{(a)} \textbf{Metricity ($\boldsymbol\zeta_v$) with log-normal shadowing in \basement{}}.
CDF of comparison on the metricity $\zeta_v$ using log-normal shadowing on distances in \basement{} averaged over 10 instances compared to actual measurements.
\textbf{(b)} \textbf{Metricity ($\boldsymbol\zeta$) with log-normal shadowing vs.~standard deviation}.
Metricity $\zeta$ (as the $95^{\text{th}}$ percentile on $\zeta_v$) using log-normal shadowing on distances in both testbeds with growing standard deviation $\sigma$.
}
\end{figure}

% Log-normal shadowing
% Linear relationship between the variance and metricity...

\subsection{Log-normal shadowing and metricity}
One of the most commonly used stochastic extensions of geometric path loss to address observed variability in signal propagation
is log-normal shadowing.
According to this model, signal decay follows the
geometric model $d^{\alpha}$, but with a multiplicative exponentially distributed
factor:
\begin{equation*}
    f(s_u, r_v) = d(s_u,r_v)^\alpha \cdot e^X,
\end{equation*}
where $X$ is a normally distributed random variable with zero mean.

We note that log-normal shadowing is an approach to introduce
non-geometric properties into gain matrices. Since metricity is
a measure of such discrepancy, it might be instructive to 
calculate metricity on instances generated with log-normal shadowing.

% Fig. 7(a): log-normal CDF in the TB-60
We used the topology of the TB-60 testbed and generated
log-normal distributions with standard deviations $\sigma=1.0$ and $2.0$, and
computed the CDF of the metricity of the resulting gain
matrices. Averages over 10 instances are shown in Fig.~\ref{fig_lognormal_cdf}, interposed
with the actual $\zeta_v$ measurements. We note a similarity 
between the measured values and the log-normal values with $\sigma=1.0$,
although the tail is measurably heavier in the latter.
%% Is this something quantifiable??

% How does \zeta grow with \sigma?
We also examined how global metricity $\zeta$ grows with the standard deviation
$\sigma$ in the generated log-normal distributed instances in both
testbeds. We show the results on Fig.~\ref{fig_lognormal_variance}, where we display the
$95^{\text{th}}$ percentile of the distribution of $\zeta_v$ values.
The plot exhibits a linear relationship, as shown in Fig.~\ref{fig_lognormal_variance},
or roughly $\zeta = 2.5 + 4.9 \sigma$. 
The agreement accords with the expected signal variations 
produced by a multiple of an exponentially distributed random variable.
%We observe significantly more fluctuations for the distances on the locations in \basement{}.
%% It might be good to quantify this better, but I'm not sure how.

\section{Finessing Multi-path Fading with Multiple Channels}
%\section{Performance Guarantees}
%%Generic Signal Decay}
\label{sec:multichannel}
%\label{arbdecay}  

% Multi-path fading & frequencies
In our experiments, high metricity values were primarily caused by a handful of links.
In particular, experiments in the simple environment of the \classroom{} testbed exhibited a higher value of $\zeta$ than we suspected.
One potential source of the complexity may be due to adverse signal reflection.

When signals travel along different paths, the superposition of the
different signals produces patterns of signal cancellation and
amplification known as multi-path fading
(\cite[Sec.~2]{Goldsmith}). This effect is particularly pronounced and
systematic in simpler settings. The interference pattern will necessarily shift
with frequency.  Hence, the influence of multi-path fading on signal
reception between a pair of points is likely to vary greatly with
the chosen channel. 

We propose to tackle the problem of signal cancellation and
destructive interference by supplying algorithms with multiple channels (frequencies) from which to choose.
% Our approach: Variation of zeta, and multi-channel capacity analysis
First, we propose a variation of
the metricity parameter $\zeta$ and evaluate the difference from the
experimental data from our testbeds. As a case study, we then formulate a
multi-channel version of the link {\capacity} problem and obtain
worst-case approximation results.

\subsection{Experimental Evaluation}

\begin{figure}[t]
\centering%
%\subfloat[Metricity in \classroom{} and \basement{}]{%
\includegraphics[width=0.7\columnwidth]{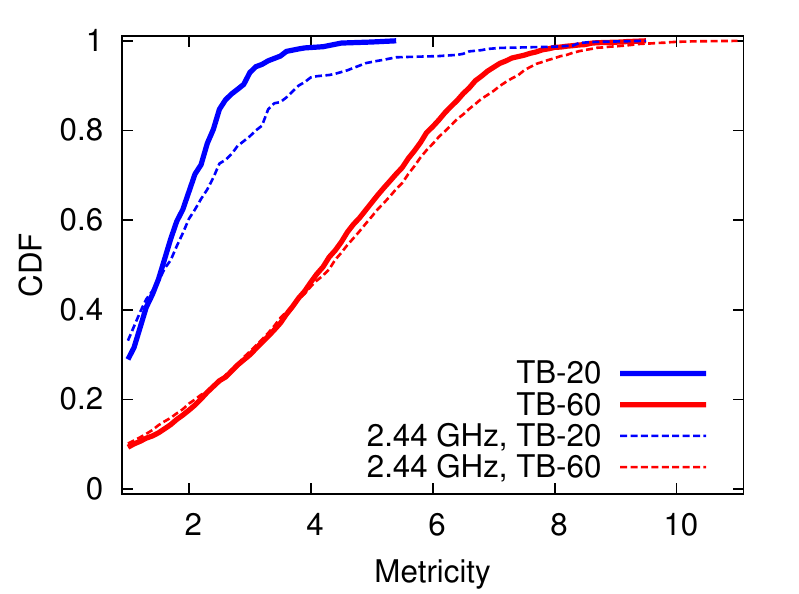}%
%\label{fig_zeta_multi}%
%}%
%\subfloat[Metricity of different subsets in \basement{}]{%
%\includegraphics[width=0.5\columnwidth]{figures/zeta-group-medians}%
%\label{fig_zeta_groups_multi}%
%}%
\caption{
\textbf{Metricity ($\boldsymbol\zeta_v$) in \classroom{} and \basement{}}.
CDF comparison of computed $\zeta_v$ values for RSS matrices on both testbeds.
The thick lines represent values computed for the $\widehat{\text{RSS}}$ matrices.
The dashed lines show $\zeta$ values for the 2.44~\giga{}\hertz{} frequency, which is representative of other frequencies in the respective testbeds.
%\textbf{(b)} \textbf{Environmental factors.}
%CDF comparison of computed $\zeta$ values on three different subsets of nodes. Group 1 has the least environmental obstacles, and group 3 the most.
}
\label{fig_zeta_multi}%

\end{figure} 
% Range of frequencies
We performed additional experiments to obtain \RSS matrices $\RSS^f$
for 8 different frequencies $f$ ranging from 2.40~\giga\hertz{} to \\
2.48~\giga\hertz{} or wavelengths between 7 and 7.15 cm. These frequencies mean
that multi-path alignment can shift from fully destructive to fully
constructive interference when the difference in the path lengths is at least 1.4 m.

% Calculation of zeta separately for different frequencies
We calculated $\zeta_v$ values separately for each $\RSS^f$.
Although they can vary significantly on a per link basis,
we found the differences in the overall distributions to be insignificant.
Fig.~\ref{fig_zeta_multi} shows $f=2.40~\giga\hertz{}$ --- other frequencies had similar distributions.

%%%% Median zetas in the different testbed
To factor out frequency-dependent fading,
% investigate $\zeta$ with minimal frequency dependent signal attenuation variance,
% a `median pairwise connectivity matrix.'
we computed for each node pair $(s_v, r_v)$ the median $\widehat{\text{RSS}_v}$ of the eight $\RSS^f_v$ values 
ranging over the different frequencies.
%This matrix contains for each node pair $\ell_v$ the median $\RSS$ over the measured values for $\RSS_{vv}$ in the different frequencies.
We observe that $\zeta_v$ values of the matrix $\widehat{\text{RSS}}$ are
significantly lower,
as the thick lines in Fig.~\ref{fig_zeta_multi} indicate.
% XXX emphasize this ?
% The links with the highest \zetas for particular frequencies are not necessarily highest for other frequencies

%%%% Median zetas in the different testbed
The frequency dependency is more apparent in \classroom{}, which suggests that signal reflection plays a relatively large role in that environment. % XXX what about noise?
%It is not surprising that signal reflection plays a relatively large role in \classroom{}.
%It forms a 1~\meter{} spaced grid and the nodes are mounted 1m above the ground, which makes it a good candidate to observe (frequency dependent) multi-path fading phenomena.
The increased reliance on a particular frequency can be explained by the regular grid structure and condensed setting of \classroom{} (Fig.~\ref{fig_testbed}), which makes the testbed a good candidate to observe (frequency dependent) multi-path fading phenomena. %
However, the links in \basement{} are on average longer and thus reflection plays a smaller role in signal attenuation.
%Another possible explanation for the difference in frequency dependency in the testbeds  is that the links in \basement{} are longer and thus reflection plays a smaller role in signal attenuation.
Furthermore, the greater number of obstacles in \basement{} may also explain the decreased dependency on frequency.

%bridge to theory section
The observation that the channels have different fading properties brought us to introduce a new version of the \capacity{} problem to incorporate different frequencies.
%, which we discuss in Section \ref{arbdecay}.

%XXX \textit{Takeaways}
% We obtained values for parameter $\zeta$, which allows us to use \decaymodel{} for algorithm design and analysis, and observed that $\zeta$ depends on the environment and frequency.
% MMH: This sentence is correct, but is not useful enough, IMHO

%% -------------------------------------------------

\subsection{Link capacity with multiple frequencies}
%% Multi-channel Capacity: Definition and explanation
The empirical indications -- that having a
choice of channels use results in smaller values of $\zeta_v$,
and thus better approximation factors -- motivate us to generalize
the {\capacity} problem. Namely, in {\multicapacity}, we assign links to a set of frequencies,
but each link is only \emph{eligible} to use a subset of the frequencies.
As before, we want to assign as many links as possible with the constraint
that those assigned to a given frequency form a feasible set.
%, which suggest that we can obtain a lower value
%for $\zeta$, and thus a better approximation for {\capacity}, if we
%use the appropriate frequencies for the different links.  We omit the
% (computationally hard) problem of finding the optimal frequencies and
% assume that any link is \emph{eligible} in a subset of frequencies.
% frequencies which make a relatively really weak link with a high zeta are forbidden
%we realize this is a simplification of the actual problem where also interference plays a role but then it's computationally even harder

% Caveats
This formulation considers links that experience significant
frequ\-ency-dependent fading as not usable in that frequency. It does
not take into consideration the possible decrease in
\emph{interference} due to such fading. One reason is that such fading
is too unpredictable to expect any algorithm to utilize that to obtain
better solutions than otherwise, and thus it is also not fair to
compare with such a strong adversary.  The other reason is that with
arbitrary fading patterns, we are back in the \emph{abstract SINR}
model, for which very strong inapproximability results hold
\cite{GHWW09}.

% Problem formulation
The classic {\capacity} problem is to find a maximum subset $S \subseteq L$ of a given set $L$ of links that can successfully transmit simultaneously.
We modify the {\capacity} problem to fit our observations on the use of multiple frequencies:
\smallskip

\multicapacity  % \\
\begin{compactdesc}
%\begin{description}
%    \setlength{\itemsep}{0pt}%
%    \setlength{\parskip}{0pt}%
\item[\emph{Given:}] A set $L$ of $n$ links, and $k$ subsets $L_1, L_2, \ldots, L_k \subseteq L$.
\item[\emph{Find:}] Sets $S_1, S_2, \ldots, S_k$ with $S_i \subseteq L_i$ and $S_i$ feasible, for $i = 1,2, \ldots, k$.
\item[\emph{Maximize:}] $|S_1 \cup S_2, \ldots \cup S_k|$.
%\end{description}
\end{compactdesc}
Here $L_i$ represents the links that are eligible for frequency $i$ and $S_i$ those scheduled for that frequency.
\smallskip

\iffalse
In {\multicapacity}, we are given a set of links
%, each of which is a pair of points in a metric space, 
and a set of frequencies, where each link is eligible for use in only a subset of frequencies.  
We want to find a feasible
subset, i.e., one that can be scheduled simultaneously, with each link
satisfying the SINR constraints on every frequency, such that the
union of these subsets is as large as possible.
\fi

\iffalse
%% Capacity definition
In {\capacity}, we are given a set of links, each of which is
a pair of points in a metric space, and want to find a largest
possible feasible subset, i.e., one that can be scheduled
simultaneously, with each link satisfying the SINR constraints.  This
problem derives its fundamental status from being a subroutine for
higher-level objectives, including link scheduling, flow maximization,
and sustained throughput scheduling.  It is also important that it is
currently the only one of these problems that has constant-factor
performance guarantees that hold for arbitrary length links.
%, both for fixed power \cite{SODA11} and with respect to arbitrary
%power assignment \cite{KesselheimSODA11}

% We use a fixed power assignment where all links use the same power $P$.
% We will introduce an important structural result and use it to obtain a constant approximation to \capacity.
% Then we will discuss how this structural property can be used to translate multiple results on geometric SINR to the decay inframetric.
\fi

\subsubsection{Additional definitions} 
%We first need additional definitions.
To simplify notation we write $f_{uv} = f(s_u, r_v)$ and $f_v = f_{vv}$.
%% Monotone power
We assume a total order $\prec$ on the links, where $\ell_v
\prec \ell_w$ implies that $f_v \le f_w$. We use the shorthand
notation $\ell_v \prec L$ to denote that $\ell_v \prec \ell_u$ for all
links $\ell_u$ in $L$.  A power assignment \cal{P} is \emph{decay
  monotone} if $P_v \le P_w$ whenever $\ell_v \prec \ell_w$,
\emph{reception monotone} if $\frac{P_w}{f_w} \le \frac{P_v}{f_v}$
whenever $\ell_v \prec \ell_w$, and simply \emph{monotone} if both
properties hold.\footnote{This corresponds to \emph{length monotone}
  and \emph{sublinear} power assignments in {\geomodel} \cite{KV10}.} This
captures the main power strategies, including uniform and linear
power.

%% Affectance
We modify the notion of \emph{affectance} \cite{GHWW09,HW09,KV10}:
The affectance $a^{\cal{P}}_w(v)$ of link $\ell_w$ on link $\ell_v$ under power assignment $\cal{P}$  is the interference of $\ell_w$ on $\ell_v$ normalized to the signal strength (power received) of $\ell_v$, or
\begin{equation}
a_w(v) = \min \left(1, c_v \frac{P_w G_{wv}}{P_v G_{vv}}\right) = \min \left(1, \frac{P_w}{P_v} \frac{f_v}{f_{wv}}\right)\ ,
\label{eqn:aff}
\end{equation}
where $c_v= \frac{\beta}{1-\beta N/(P_v G_{vv})} > \beta$ is a constant depending only on universal constants and the signal strength $G_{vv}$ of $\ell_v$, indicating the extent to which the ambient noise affects the transmission. 
We drop $\cal{P}$ when clear from context.
Furthermore let $a_v(v) = 0$. For a set $S$ of links and link $\ell_v$, let $a_v(S) = \sum_{\ell_w \in S} a_v(w)$ and $a_S(v) = \sum_{\ell_w \in S} a_w(v)$.
Assuming $S$ contains more than two links we can rewrite Eqn.~\ref{eqn:sinr} as $a_S(v) \leq 1$ and this is the form we will use.
Observe that affectance is additive and thus $a_S(v) = a_{S_1}(v) + a_{S_2}(v)$ for any partition ($S_1, S_2$) of $S$.

%% One-way affectance
We define a weight function $W_+(v,w) = a_v(w) + a_w(v)$, when $\ell_v \prec \ell_w$ and $W_+(v,w)=0$, otherwise.
The plus sign is to remind us that weights are from smaller to larger decay links.
Also, $W_+(X,v) = \sum_{\ell_w \in X} W_+(w,v)$, representing the sum of the in- and out-affectances (as in Eqn. \ref{eqn:aff}) of a link $v$ to and from those links in set $X$ that have smaller decay.

% Bi-feasibility
A set $S$ of links is \emph{anti-feasible} if $a_v(S) \leq 2$ for
every link $\ell_v \in S$ and \emph{bi-feasible} if both feasible and
anti-feasible \cite{icalp11}.  More generally, for $K \ge 1$, $S$ is
\emph{$K$-feasible} (\emph{$K$-anti-feasible}) if $a_v(S) \le 1/K$
($a_S(v) \le 2/K$), and \emph{$K$-bi-feasible} if both.

%\textbf{(Non)-weak links}
%A link $\ell_v$ is \emph{non-weak} if $c_v \leq 4$ and thus $\frac{P_v}{f_v} \geq 4 \beta N$. 
%Intuitively this means that the link uses at least a little more power than needed to overcome the ambient noise (the constant 4 can be replaced with any other constant larger than 1).
%We will frequently treat \emph{weak} links separately from non-weak links.

\subsubsection{Approximation of \multicapacity}
\label{sec:fixedpower}
We extend a greedy algorithm for {\capacity} \cite{SODA11} and show
that it gives equally good approximation algorithm for {\multicapacity}, even in
{\decaymodel}.
We assume that the links are assigned monotone power.

\newcommand{\algorithmicbreak}{\textbf{break}}
\newcommand{\BREAK}{\STATE \algorithmicbreak}
\begin{algorithm}[h]
\small
\caption{{\multicapacity} in {\decaymodel}}\label{alg:capfixtri}
\begin{algorithmic}
\STATE Let $L$ be a set of links using monotone power $\calP$ and 
$L_1, L_2, \ldots, L_k \subseteq L$ be subsets.
%%  of links eligible on the corresponding frequencies.
\STATE Set $X_1, X_2, \ldots X_k \leftarrow \emptyset$
\FOR {$\ell_v \in L$ in order of increasing $f_v$ values}
\FOR {$i \leftarrow 1 \ldots k$}
\IF {$W_+(X_i,v) \leq 1/2$} \label{alg:tri1/2}
\STATE $X_i \leftarrow X_i \cup \{\ell_v\}$
\BREAK
\ENDIF
\ENDFOR
\ENDFOR
\FOR{each $X_i$}
\STATE $S_i \leftarrow \{\ell_v \in X_i| a_{X_i}(v) \leq 1\}$
%\STATE $S_i \leftarrow \{\ell_v \in X_i| \sum_{\ell_w \in X_i} a_w(v) \leq 1\}$
\ENDFOR
\RETURN $(S_1, S_2, \ldots, S_k)$
%\STATE return $S = S_1 \cup S_2 \cup \ldots \cup S_k$
\end{algorithmic}
\end{algorithm}
Note that the sets returned by Algorithm \ref{alg:capfixtri} are
feasible by construction.

We turn to proving a performance guarantee for the algorithm.
The following key result bounds the affectance of a feasible set to a
(shorter) link outside the set to a constant. A similar but weaker bound
was first introduced by Kesselheim and V\"ocking \cite{KV10}.
%We adjusted the proof to fit the decay inframetric.

\begin{lemma}
  Let $L$ be a $3^\zeta/\beta$-bi-feasible set with monotone power
  assignment $\calP$ and let $\ell_v$ be a link (not necessarily in
  $L$) with $\ell_v \prec L$. Then, $W_+(v,L) = O(1)$.
\label{thm:W(L,v)}
\end{lemma}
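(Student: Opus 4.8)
The plan is to bound $W_+(v,L) = \sum_{\ell_w \in L} (a_v(w) + a_w(v))$ by splitting $L$ into "distance classes" according to how far $\ell_w$ is from $\ell_v$ relative to the decay $f_v$, and to show that (a) each class contributes $O(1)$ and (b) only $O(1)$ nonempty classes matter, or more precisely that the contributions decay geometrically across classes so the total is a convergent geometric series. The key structural tool is the generalized triangle inequality, Eqn.~\ref{eq:zeta}: for any two links $\ell_u, \ell_w$ and the point $r_v$ (or $s_v$), we have $f(s_u,r_v)^{1/\zeta} \le f(s_u,r_w)^{1/\zeta} + f(s_w,r_v)^{1/\zeta}$ and similar variants, which lets us relate $f_{vw}$ and $f_{wv}$ to $f_v$, $f_w$ and the "cross" quantities appearing in affectance. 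Since $\ell_v \prec L$, we have $f_v \le f_w$ for all $\ell_w \in L$, which will be used repeatedly to simplify.

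First I would define, for $j \ge 0$, the class $L_j = \{\ell_w \in L : 2^j \le (f_{vw}/f_v)^{1/\zeta} < 2^{j+1}\}$ (and analogously a class structure for the reverse direction using $f_{wv}$); this partitions $L$ up to finitely many boundary cases. For a link $\ell_w \in L_j$, the out-affectance term $a_v(w) = \min(1, \frac{P_w}{P_v}\frac{f_v}{f_{wv}})$ and the in-affectance term $a_w(v)$ can each be bounded, using monotonicity of $\calP$ (so $P_w/P_v$ and $P_v f_w /(P_w f_v)$ are controlled) and the triangle inequality to convert $f_{vw}, f_{wv}$ into $f_v, f_w$ and the distance-class index $j$. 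The upshot I expect is a bound of the form $a_v(w) + a_w(v) = O(2^{-\zeta j} \cdot (\text{affectance-like quantity internal to } L))$ — i.e. belonging to a far class costs a factor $2^{-\zeta j}$.

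Next, within a fixed class $L_j$, I would use the $3^\zeta/\beta$-bi-feasibility of $L$: the links of $L_j$ are all at "comparable distance" from $\ell_v$, so the triangle inequality shows that the mutual affectances among $L_j \cup \{\ell_v\}$ (with $\ell_v$ temporarily inserted) are within a $3^\zeta$ factor of genuine affectances in the bi-feasible set $L$. Hence $\sum_{\ell_w \in L_j}(a_v(w)+a_w(v)) \le 3^\zeta \cdot O(\text{bi-feasibility bound}) = 3^\zeta \cdot O(\beta/3^\zeta) = O(1)$ per class. Wait — that is only $O(1)$ per class; combining with the geometric factor $2^{-\zeta j}$ from the previous step, summing over $j \ge 0$ gives $\sum_j O(1)\cdot 2^{-\zeta j} = O(1)$ since $\zeta \ge 1$. (The $3^\zeta$ in the feasibility hypothesis is exactly calibrated so that the per-class "blow-up" from inserting $\ell_v$ is absorbed.)

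The main obstacle I anticipate is the bookkeeping in the triangle-inequality manipulations: Eqn.~\ref{eq:zeta} is stated for a single point $z$, but affectance involves $f_{wv} = f(s_w, r_v)$ and $f_{vw} = f(s_v, r_w)$ — mixed sender/receiver pairs — so I will need to apply the inequality with $z = s_w$, $z = r_w$, $z = s_v$ or $z = r_v$ as appropriate, and track which $\zeta(\cdot,\cdot)$ (all $\le \zeta$) governs each step, being careful that the exponents $1/\zeta$ point the right way (the inequality $a^{1/\zeta} \le b^{1/\zeta}+c^{1/\zeta}$ does not directly bound $a$ by $b+c$ unless one is the dominant term, which is precisely why the distance-class decomposition is needed). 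A secondary subtlety is handling the $\min(1,\cdot)$ truncation in the definition of affectance and the constant $c_v > \beta$ cleanly; I would note that for the purpose of an upper bound the truncation only helps, and the $\beta$ in $c_v$ is matched by the $\beta$ in the $3^\zeta/\beta$ feasibility parameter. Modulo these calculations, the argument is a direct adaptation of the Kesselheim–Vöcking-style bound \cite{KV10} with $\alpha$ replaced by $\zeta$ and the ordinary triangle inequality replaced by Eqn.~\ref{eq:zeta}.
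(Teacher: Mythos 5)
Your high-level plan --- control $W_+(v,L)=a_v(L)+a_L(v)$ via the weak triangle inequality in the style of Kesselheim--V\"ocking --- points in the right direction, but the concrete mechanism you propose does not survive the move to general gain matrices. The distance-class decomposition with a geometric decay factor $2^{-\zeta j}$ per class is the \emph{planar} packing argument: there, the per-class contribution is controlled by bounding the \emph{number} of senders in each annulus around $r_v$ via an area (or doubling-dimension) argument. In a general gain matrix there is no area and hence no way to bound $|L_j|$, or the total affectance of a class on $\ell_v$, from the class definition alone. Your substitute --- ``the mutual affectances among $L_j\cup\{\ell_v\}$ are within a $3^\zeta$ factor of genuine affectances in the bi-feasible set, hence each class contributes $O(1)$'' --- both lacks a mechanism (affectance relative to \emph{which} link of $L$? you need a designated proxy together with a proof that it is close to $r_v$) and double-counts: if each class is charged against the $\beta/3^\zeta$ feasibility budget of its own proxy, you get $O(1)$ per class with no decay, i.e.\ $O(\log(f_{max}/f_{min}))$ overall, and the extra factor $2^{-\zeta j}$ has no independent source.

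The paper's proof needs no classes at all. For the half $a_L(v)$ it fixes a single proxy $\ell_u\in L$ minimizing $F=f(r_v,r_u)$ and proves the structural inequality $f_{xu}\le 3^\zeta f_{xv}$ for \emph{every} $\ell_x\in L$, whence $a_x(v)\le 3^\zeta a_x(u)$ and $a_L(v)\le 1+3^\zeta\, a_{L\setminus\{\ell_w\}}(u)\le 1+3^\zeta\cdot\beta/3^\zeta=O(1)$ by $3^\zeta/\beta$-feasibility. The step missing from your outline is the proof that this proxy is genuinely close: one must show $F^{1/\zeta}\le 2Q^{1/\zeta}$ where $Q=f_{xv}$ for an arbitrary $\ell_x\in L$, and this in turn requires establishing $f_y\le Q$ --- that links of a $3^\zeta/\beta$-feasible set cannot have decay exceeding their separation from $r_v$ --- which is extracted from feasibility via $3^\zeta f_y\le f_{yz}$ combined with a chain of weak triangle inequalities. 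That is the real content of the lemma, and it is where feasibility of $L$ enters in an essential, non-obvious way. Finally, the other half $a_v(L)$ is handled by reversing all links (senders become receivers), under which anti-feasibility of $L$ becomes feasibility of the reversed set --- which is why the hypothesis is \emph{bi}-feasibility; your proposal mentions bi-feasibility but never exploits its two halves separately.
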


We prove Lemma \ref{thm:W(L,v)} by splitting it into two lemmas
bounding in-affectance for links in a feasible set and similarly
bounding out-affectance for links in an anti-feasible set.

\begin{lemma}
  Let $L$ be a $3^\zeta/\beta$-feasible set with monotone power
  assignment $\calP$ and $\ell_v$ be any link with $\ell_v \prec L$.
  Then, $a^\calP_L(v) = O(1)$.
\label{lem:a_L(v)}
\end{lemma}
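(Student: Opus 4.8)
The plan is to bound the in-affectance onto $\ell_v$ by \emph{transferring} it, link by link, onto the links of $L$, paying a factor at most $3^\zeta$ per transfer via the generalized triangle inequality~(\ref{eq:zeta}), and arranging that each link of $L$ absorbs only $O(1)$ such charges; the scaled feasibility of $L$ then closes the argument, since $3^\zeta/\beta$-feasibility gives $a_L(u)\le\beta/3^\zeta$ for every $\ell_u\in L$, so the total is at most $3^\zeta\cdot O(1)\cdot\max_{\ell_u\in L}a_L(u)=O(\beta)=O(1)$. Writing $\rho(x,y):=f(x,y)^{1/\zeta}$ (a quasimetric by~(\ref{eq:zeta})), I would first extract two consequences of monotone power. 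Since $\ell_v\prec\ell_w$, reception monotonicity gives $P_w/f_w\le P_v/f_v$, hence
\[
a_w(v)=\frac{P_w}{P_v}\cdot\frac{f_v}{f_{wv}}\ \le\ \frac{f_w}{f_{wv}}\ =\ \left(\frac{\rho(s_w,r_w)}{\rho(s_w,r_v)}\right)^{\!\zeta},
\]
so $a_w(v)$ is governed by the ratio of $\ell_w$'s length to the distance from its sender to $r_v$. A symmetric computation shows $3^\zeta/\beta$-feasibility of $L$ forces the separation $f_{wu}\ge(3^\zeta/\beta)\min(f_w,f_u)$, i.e.\ $\rho(s_w,r_u)\ge(3/\beta^{1/\zeta})\min(\rho(s_w,r_w),\rho(s_u,r_u))$, for all $\ell_w,\ell_u\in L$.

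For the transfer, group the links $\ell_w\in L$ by length class ($\rho(s_w,r_w)\approx 2^i\rho(s_v,r_v)$) and by distance from $r_v$ ($\rho(s_w,r_v)\approx 2^k\rho(s_v,r_v)$), with $i,k\ge0$ since $\ell_v\prec L$. Inside one such block all links have comparable length and comparable distance from $r_v$, so applying~(\ref{eq:zeta}) to the triple $(s_w,r_v,r_u)$ — keeping every hop oriented sender-to-receiver, and using monotonicity of $\calP$ to reconcile the reversed middle hop — shows any two links of the block lie within a bounded factor of each other, hence each affects the other by $\Omega\!\left(3^{-\zeta}2^{-|i-k|\zeta}\right)$. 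Feeding this into $a_L(u)\le\beta/3^\zeta$ caps the block size at $O\!\left(\beta\,2^{|i-k|\zeta}\right)$, while the per-link affectance onto $\ell_v$ there is $O(2^{(i-k)\zeta})$ (capped at $1$ for $k\le i$); so each block contributes $O(\beta)$, the contributions decay geometrically off the diagonal $k\approx i$ and in $i$, and $a_L(v)=\sum_{i,k}\sum_{\ell_w\in L_{i,k}}a_w(v)=O(1)$. Equivalently one may phrase this as an explicit charging $\ell_w\mapsto\ell_{u(w)}$ to a nearby comparable-length link of $L$, which is $O(1)$-to-one with charge blow-up $\le3^\zeta$.

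The main obstacle is the accounting of constants: every use of~(\ref{eq:zeta}) costs an additive term at the $\rho$-level, hence a multiplicative constant at the $f$-level that is then raised to the power $\zeta$, and the argument is only useful if all these constants multiply out to at most the $3^\zeta$ already built into the hypothesis — a bound of the form $C^\zeta$ with $C>1$ would be worthless, so the triangle chains must be kept to length $\le3$ and every exponent tracked exactly. A secondary subtlety is that the gain matrix is directed, so $f_{uv}\ne f_{vu}$ in general; this is handled by routing every hop of every triangle inequality as ``sender $\to$ receiver'' (the orientation in which affectance is stated) and invoking \emph{both} decay- and reception-monotonicity of $\calP$ to convert between the two orientations of a pair whenever a chain would otherwise demand the reverse direction. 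Note that the block/charge counting rests on the \emph{affectance budget} of the feasible set $L$, not on any packing or doubling property of the underlying space — which is precisely what makes the argument survive the passage from the $\geomodel$ model to an arbitrary gain matrix.
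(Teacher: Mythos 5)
Your high-level plan --- transfer the in-affectance on $\ell_v$ onto links of $L$ at a cost of $3^\zeta$ and let the feasibility budget absorb it --- is the right intuition, and you correctly identify the make-or-break issue (every constant gets raised to the power $\zeta$, so anything beyond the single $3^\zeta$ granted by the hypothesis is fatal). But the concrete mechanism you propose does not survive that very test. In your dyadic block decomposition, two links in the same $(i,k)$-block are only guaranteed to be within a factor of roughly $6$ of each other at the $\rho$-level (a factor $2$ from the dyadic width plus a factor $\approx 3$ from the triangle chain through $r_v$), so their mutual affectance is only $\Omega(\beta\,6^{-\zeta})$ against a budget of $\beta\,3^{-\zeta}$; the block size bound you get is $O(2^\zeta)$, not $O(1)$, and on the near-diagonal blocks ($k\le i$) each member can have affectance up to $1$ on $\ell_v$, so a single block already contributes $\Theta(2^\zeta)$. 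The claimed geometric decay off the diagonal also does not materialize: the block-size upper bound and the per-link affectance upper bound scale as reciprocal powers of $2^{|i-k|\zeta}$, so their product is roughly constant per block and the double sum over $(i,k)$ does not converge. Your fallback, an ``$O(1)$-to-one'' charging $\ell_w\mapsto\ell_{u(w)}$ with varying targets, has the same problem: $\sum_w 3^\zeta a_w(u(w))$ spreads over many targets and cannot be collapsed to a single $a_L(u)\le\beta/3^\zeta$ bound, so the budget is spent $|L|$ times over.

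The idea you are missing is that the charging must be \emph{all-to-one}: the paper picks a single global proxy $\ell_u\in L$, namely the link whose receiver minimizes $F:=f(r_v,r_u)$, and proves the uniform transfer inequality $f_{xu}\le 3^\zeta f_{xv}$ for \emph{every} $\ell_x\in L$, whence $a_x(v)\le 3^\zeta a_x(u)$ and the whole sum collapses to $a_L(v)\le 1+3^\zeta a_{L\setminus\{\ell_w\}}(u)\le 1+\beta$ --- no packing, no block counting, and the $3^\zeta$ from the hypothesis cancels the $3^\zeta$ from the transfer exactly once. Establishing $F\le 2^\zeta f_{xv}$ requires an auxiliary step you would also need: letting $\ell_w$ be the link of $L$ whose sender is nearest to $r_v$, one uses the feasibility separation $f_{yz}\ge 3^\zeta f_y$ together with the weak triangle inequality to show $f_w\le f_{xv}$ for every $\ell_x\in L$, and only then does the chain $F^{1/\zeta}\le f_{wv}^{1/\zeta}+f_w^{1/\zeta}\le 2f_{xv}^{1/\zeta}$ close. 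This is also where the links whose senders sit very close to $r_v$ are handled: exactly one of them ($\ell_w$) is allowed to contribute the capped affectance $1$, and all others are shown to be far enough away that the transfer to $\ell_u$ controls them. Your observation that the argument should rest on the affectance budget rather than on any doubling property of the space is correct --- that is precisely what the single-proxy argument achieves and what the block decomposition fails to deliver.
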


\begin{proof}
  The basic idea is to identify a ``proxy'' for $\ell_v$ within the
  set $L$.  Namely, we bound the affectance of $L$ on $\ell_v$ in
  terms of the affectance on the ``nearest'' link $\ell_u$ in $L$,
  which is small since $L$ is feasible and contains $\ell_u$.

\tikzstyle{vertex}=[circle,fill=black!25,minimum size=10pt,inner sep=0pt]
\tikzstyle{edge} = [draw,thick,-]
\tikzstyle{arrow} = [draw,thick,->]
\tikzstyle{line} = [draw, thick, -]
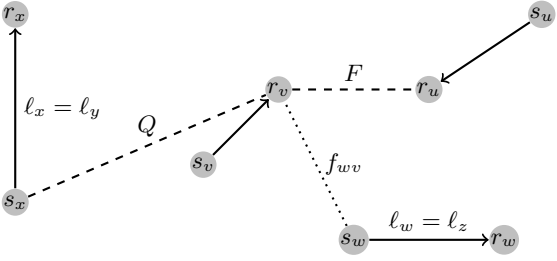
\begin{figure}
\begin{tikzpicture}
%\draw[help lines](0,0) grid(12,5);
\foreach \pos/\name in {{(4.5,2)/s_v},{(5.5,3)/r_v},{(9,4)/s_u},{(7.5,3)/r_u}, {(6.5,1)/s_w},{(8.5,1)/r_w},{(2,1.5)/s_x},{(2, 4)/r_x}}
	\node[vertex](\name) at \pos{$\name$};
\foreach \src/\dest in {s_v/r_v, s_u/r_u} {
	\path[arrow](\src) -- (\dest);}
\path[line,dashed](r_v)--(r_u) node[above, midway]{$F$};
\path[line,dashed](s_x)--(r_v) node[above, midway]{$Q$};
\path[line,dotted](s_w)--(r_v) node[right,midway]{$f_{wv}$};
\path[arrow](s_w)--(r_w) node[above,midway]{$\ell_w = \ell_z$};
\path[arrow](s_x)--(r_x) node[right, midway]{$\ell_x = \ell_y$};
\end{tikzpicture}
\caption{We show that for each link $\ell_x$ it holds that $Q \geq \frac{F}{2^\zeta}$.}\label{fig:structuralvisualization}
\end{figure}

Formally, consider the link $\ell_u = (s_u, r_u) \in L$ such that $F
:=f(r_v, r_u)$ is minimum and link $\ell_w = (s_w, r_w) \in S$ such
that $f_{wv} = f(s_w, r_v)$ is minimum (possibly $\ell_u = \ell_w$).
Let $\ell_x$ be an arbitrary link in $L$ and define $Q = f_{xv}$.  See
Fig.~\ref{fig:structuralvisualization}.

We first show that 
\begin{equation}\label{eq:dxvtri}
F \le 2^\zeta Q \ .
\end{equation}
Let $\ell_y, \ell_z$ be renamings of the links $\ell_x, \ell_w$ such
that $f_y = \max(f_x,f_w)$ and $f_z = \min(f_x,f_w)$.

By definition of $\ell_u$ and $\ell_w$, it holds that
$\max(f_{yv}, f_{zv}) \le Q$.
Thus, using the weak triangular inequality,
\begin{equation}
f(s_y,s_z)^{1/\zeta} \le f_{yv}^{1/\zeta} + f_{zv}^{1/\zeta} \le 2 Q^{1/\zeta} \ .
\label{eq:fswsx}
\end{equation}
Using Eqn. \ref{eq:fswsx} and that $f_z \le f_y$, it holds that
\begin{equation}
 f_{yz}^{1/\zeta} \le f_z^{1/\zeta} + f(s_y,s_z)^{1/\zeta}
      \le f_y^{1/\zeta} + 2 Q^{1/\zeta} \ .
\label{eq:fyz2}
\end{equation}

By the feasibility condition on $L$, $a_y(z) \le \beta/3^\zeta$,
while by definition of affectance and reception monotonicity 
(i.e., $P_y/f_y \le P_z/f_z$),
\[ a_y(z) = c_z \frac{P_y}{P_z} \frac{f_z}{f_{yz}} \ge \beta \frac{f_y}{f_{yz}}\ . \]
Combining the two bounds on $a_y(z)$, we get that  $3^\zeta \cdot f_y \le f_{yz}$.
% \begin{equation}
%  f_{yz} \ge 3^\zeta \cdot f_y\ .
% \label{eq:fyz}
%\end{equation}
%Combining (\ref{eq:fyz}) and (\ref{eq:fyz2}) and canceling a $f_y$ factor, 
That, combined with Eqn. \ref{eq:fyz2} and canceling a $f_y$ factor, gives that
$3^\zeta \le \left(1 + 2(Q/f_y)^{1/\zeta}\right)^\zeta$, which implies that $f_y \le Q$
and further that $f_w \le f_y \le Q$.
Then, by the definitions of $F$, $\ell_w$ and $Q$,
\begin{equation}
F^{1/\zeta} = f(r_v,r_w)^{1/\zeta} \le f_{wv}^{1/\zeta} + f_w^{1/\zeta} 
\le 2 \cdot Q^{1/\zeta}\ ,
\label{eq:ftoq}
\end{equation}
implying Eqn.~\ref{eq:dxvtri}, as desired.

Now, using the weak triangular inequality, the definition of $F$,
%%% $F \le \min(f(r_y,r_v), f(r_z,r_v))$ 
and Eqn.~\ref{eq:ftoq}, we get that
\[
f_{xu}^{1/\zeta} \le f_{xv}^{1/\zeta} + f(r_v, r_u)^{1/\zeta}
  \le Q^{1/\zeta} + F^{1/\zeta}
  \le Q^{1/\zeta} + 2 Q^{1/\zeta}\ . \]
Thus,
\begin{equation}
f_{xu}  =  3^\zeta Q = 3^\zeta f_{xv}\ . \label{eq:fxufxv}
\end{equation}
%f_{xu} \leq \varphi\max(f_{xv}, f(r_v, r_u)) = \varphi\max(Q,F)

Observe that since $f_v \leq f_u$ and power is monotone, it holds that $c_v \leq c_u$.
Then, using Eqn.~\ref{eq:fxufxv} and the definition of affectance,
\begin{equation*}
a_x(v) = c_v \frac{P_x}{f_{xv}}\frac{f_v}{P_v} 
%% \leq c_u \frac{P_x}{f_{xv}}\frac{f_u}{P_u} 
\leq c_u \frac{3^\zeta P_x}{f_{xu}} \frac{f_u}{P_u} = 3^\zeta a_x(u)\ .
\end{equation*}

Finally, letting $L_w = L \setminus \{\ell_w\}$,
we sum over all links in $L$,
%KEEPING THIS FOR 2 COLUMN LAYOUT
%\begin{eqnarray*}
%a_S(v) &=& a_w(v) + \sum_{\ell_x \in S \setminus \{\ell_w\}} a_x(v) \\
%&\leq& 1 + \varphi^2 \sum_{\ell_x \in S \setminus \{\ell_w\}} a_x(u)\\
%&\leq& 1 + \varphi^2 \cdot 1 = O(\varphi^2)\ .
%\end{eqnarray*}
$$
a_L(v) = a_w(v) + a_{L_w}(v) 
\leq 1 + 3^\zeta a_{L_w}(u)
\leq 1 + 3^\zeta \cdot \frac{\beta}{3^{\zeta}} = O(1)\ ,
$$
using the feasibility assumption for the last inequality.
\end{proof}

For anti-feasible sets a similar result holds with a nearly
identical proof, swapping the roles of senders and receivers of the links.

\begin{lemma}\label{lem:a_v(L)}
Let $L$ be a $3^\zeta/\beta$-anti-feasible set with monotone power assignment $\calP$ and let $\ell_v$ be a link with $\ell_v \prec L$. Then, $a^\calP_v(L) = O(1)$. 
\end{lemma}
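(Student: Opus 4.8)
The plan is to mirror the proof of Lemma~\ref{lem:a_L(v)} almost verbatim, exchanging the roles of senders and receivers throughout and invoking \emph{anti}-feasibility of $L$ wherever feasibility was used. We must bound $a_v(L)=\sum_{\ell_x\in L}a_v(x)$, and the cross term governing $a_v(x)$ is now $f_{vx}=f(s_v,r_x)$ in place of $f_{xv}$. Accordingly, I would take as ``proxy'' the link $\ell_u\in L$ minimizing $F:=f(s_u,s_v)$ --- the dual of the receiver-nearest link used in Lemma~\ref{lem:a_L(v)} --- and as auxiliary link $\ell_w\in L$ the one minimizing $f_{vw}=f(s_v,r_w)$; for an arbitrary $\ell_x\in L$ put $Q:=f_{vx}$.

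The first and main step is the dual of the structural inequality~\eqref{eq:dxvtri}, namely $F\le 2^\zeta Q$. I would run the same chain as~\eqref{eq:fswsx}--\eqref{eq:ftoq}: rename $\ell_x,\ell_w$ as $\ell_y,\ell_z$ by self-decay, note $\max(f_{vy},f_{vz})\le Q$, apply the weak triangle inequality~\eqref{eq:zeta} routed through the common point $s_v$ to bound $f(r_y,r_z)$, route it again through a receiver inside $L$ to bound $f_{zy}=f(s_z,r_y)$, and then combine the resulting estimate with the bound on the affectance $a_z(y)$ coming from \emph{anti}-feasibility of $L$ together with power monotonicity (the analogue of the feasibility step that produced $3^\zeta f_y\le f_{yz}$). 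This forces $f_w\le Q$, and one more application of~\eqref{eq:zeta} through the receiver of $\ell_w$, together with $f_{vw}\le Q$, yields $F\le f(s_w,s_v)\le 2^\zeta Q$.

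Given $F\le 2^\zeta Q$, the weak triangle inequality through $s_v$ gives $f_{ux}=f(s_u,r_x)\le 3^\zeta f_{vx}$ for every $\ell_x\in L$. Since the \emph{affected} link is $\ell_x$ in both $a_v(x)$ and $a_u(x)$, the noise constant $c_x$ is the same in both, so using $\ell_v\prec\ell_u$ and monotonicity of the power assignment we obtain $a_v(x)=c_x\frac{P_v}{P_x}\frac{f_x}{f_{vx}}\le 3^\zeta\,c_x\frac{P_u}{P_x}\frac{f_x}{f_{ux}}=3^\zeta a_u(x)$. Summing over $\ell_x\in L$ and peeling off the single link $\ell_w$ (for which $a_v(w)\le1$ trivially) gives $a_v(L)=a_v(w)+a_v(L\setminus\{\ell_w\})\le 1+3^\zeta a_u(L\setminus\{\ell_w\})\le 1+3^\zeta a_u(L)=O(1)$, the last step using that $L$ is $3^\zeta/\beta$-anti-feasible so $a_u(L)=O(\beta/3^\zeta)$.

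The main obstacle is entirely bookkeeping. Because~\eqref{eq:zeta} is a \emph{directed} inequality, every pairwise term must be kept in the orientation in which it actually occurs after swapping senders and receivers; in particular one has to check that the condition invoked at the key step of the structural inequality is anti-feasibility --- a bound on the out-affectance $a_z(L)$ --- rather than feasibility, and that the power-monotonicity property used there and in the final affectance comparison is the one that genuinely applies. No new idea beyond Lemma~\ref{lem:a_L(v)} is required; combined with Lemma~\ref{lem:a_L(v)}, this lemma immediately yields Lemma~\ref{thm:W(L,v)}, since $W_+(v,L)=a_v(L)+a_L(v)$ when $\ell_v\prec L$.
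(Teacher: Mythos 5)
Your proposal is correct and is essentially the paper's own argument: the paper packages the sender/receiver swap by forming the dual link set $L^*$ (with $s^*_v=r_v$, $r^*_v=s_v$), observing that anti-feasibility of $L$ is feasibility of $L^*$, and invoking the structural inequality from the proof of Lemma~\ref{lem:a_L(v)} to get $f_{ux}\le 3^\zeta f_{vx}$, exactly the inequality you derive by carrying the swap through inline. Your concluding comparison $a_v(x)\le 3^\zeta a_u(x)$ and the final summation peeling off $\ell_w$ match the paper's steps verbatim.
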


\begin{proof}
Form the dual links $L^*$, which has a link $l^*_v = (s^*_v,r^*_v)$
for each link $l_v=(s_v, r_v )\in L$ such that $s^*_v =r_v$ and $r^*_v = s_v$.
Clearly the lengths of $l^*_v$ and $l_v$ are the same so $f_{v^*v^*} = f_{vv}$.
Also, $f_{v^*u^*} = f_{uv}$.
Observe that the anti-feasibility assumption on $L$ implies that
$L^*$ is $3^\zeta/\beta$-feasible.
Then, we can follow the proof of Lemma \ref{lem:a_L(v)}, applied to the set $L^*$,
to get that 
\[ f_{ux} = f_{x^*u^*} \le 3^\zeta f_{x^* v^*} = 3^\zeta f_{vx}\ . \]
This implies, using the monotonicity of power, that
\begin{equation*}
a_v(x) = c_x \frac{P_v}{f_{vx}}\frac{f_x}{P_x} 
  \leq c_x \frac{3^\zeta P_u}{f_{ux}} \frac{f_x}{P_x} = 3^\zeta a_u(x)\ .
\end{equation*}
The rest of the proof is identical.
\end{proof}
Combining Lemma \ref{lem:a_L(v)} and \ref{lem:a_v(L)} implies Lemma \ref{thm:W(L,v)}.

Finally, to analyze the performance ratio of Algorithm
\ref{alg:capfixtri}, we will use an adaptation of the following
signal-strengthening lemma from \cite[Prop.~8]{FKRV09} and a lemma
generalizing a popular argument used to show that the size of a subset
of links of another set of links is large.

\begin{lemma}[\cite{FKRV09}]
  Let $L$ be a feasible set and $K \ge 1$ be a value.  Then, there
  exists a $K$-bi-feasible subset of $L$ of size $\Omega(|L|/K)$.
\label{lem:fkrw-ss}
\end{lemma}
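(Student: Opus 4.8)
The plan is a two-stage argument: first peel off a subset $L'\subseteq L$ of size $\Omega(|L|)$ that is \emph{bi-feasible} (not merely feasible), and then randomly sparsify $L'$ to buy the extra factor of $K$, losing only a $\Theta(1/K)$ fraction of the links in the process. The point of the first stage is that feasibility of $L$ directly gives $a_L(v)\le 1$ for every $\ell_v\in L$, but tells us nothing a priori about the out-affectances $a_v(L)$, which the random sparsification step will need to control.

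For the first stage I would exchange the order of summation and use additivity of affectance to get
\[
\sum_{\ell_v\in L} a_v(L) \;=\; \sum_{\ell_v\in L}\sum_{\ell_w\in L} a_v(w) \;=\; \sum_{\ell_w\in L} a_L(w) \;\le\; |L| ,
\]
the last inequality being feasibility of $L$. Markov's inequality then forces at most $|L|/2$ links to have $a_v(L) > 2$, so $L' := \{\ell_v\in L : a_v(L)\le 2\}$ has $|L'|\ge |L|/2$. Since affectance only decreases under passing to subsets, $L'$ inherits $a_{L'}(v)\le a_L(v)\le 1$ and $a_v(L')\le a_v(L)\le 2$, so $L'$ is bi-feasible.

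For the second stage I would form $S$ by keeping each link of $L'$ independently with probability $p = 1/(8K)$ (well defined since $K\ge 1$). Fixing $\ell_v\in L'$ and conditioning on $\ell_v\in S$, linearity of expectation together with $a_v(v)=0$ gives $\mathbb{E}[a_S(v)\mid \ell_v\in S]\le p\,a_{L'}(v)\le p$ and $\mathbb{E}[a_v(S)\mid \ell_v\in S]\le p\,a_v(L')\le 2p$. Two applications of Markov and a union bound show that, conditioned on $\ell_v\in S$, both $a_S(v)\le 1/K$ and $a_v(S)\le 2/K$ hold with probability at least $3/4$; call such a link \emph{surviving}. Hence the expected number of surviving links is at least $\tfrac34 p|L'| = \Omega(|L|/K)$, so some outcome realizes a set $S'$ of that many surviving links, and deleting the non-surviving links from $S$ only lowers affectances, so $S'$ is $K$-bi-feasible with $|S'| = \Omega(|L|/K)$, as required.

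The only real subtlety — and the reason the argument must be two stages — is exactly that a merely feasible set can contain a link with out-affectance $a_v(L)$ as large as $\Theta(|L|)$, so one cannot sample directly from $L$ and hope to bound $\mathbb{E}[a_v(S)]$; reducing first to the bi-feasible $L'$ is what makes the expectation estimate in the sparsification step go through. Everything else is routine bookkeeping, and the numerical constants ($8$ and $2$ above) are not optimized and can be adjusted to match the precise constants in the definitions of $K$-feasibility and $K$-anti-feasibility.
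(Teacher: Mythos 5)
Your argument is correct, and it is worth noting that the paper itself does not prove this lemma at all: it imports it from \cite{FKRV09} (Prop.~8), where the standard proof is a \emph{deterministic} signal-strengthening argument --- first the same summation-swap/Markov step you use to pass from feasible to bi-feasible, and then a greedy partition of the bi-feasible set into $O(K)$ classes (assigning each link to a class in which its accumulated bilateral affectance is below average), after which the largest class is the desired subset. Your second stage replaces that partitioning by independent sampling with probability $p=\Theta(1/K)$ plus Markov and a union bound on the conditional in- and out-affectances; this is a clean and correct alternative (the conditional expectations are computed correctly because the inclusion events are independent, affectance is additive and monotone under taking subsets, and discarding the non-surviving links only decreases affectances), and your diagnosis of why the reduction to a bi-feasible set must come first --- a merely feasible set controls only $\sum_v a_v(L)\le|L|$ on average, so individual out-affectances can be as large as $\Theta(|L|)$ --- is exactly the right point. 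What the deterministic partition buys over your version is that it is constructive and covers \emph{all} of $L'$ by $O(K)$ usable classes (useful when one needs a schedule rather than a single subset), whereas your probabilistic-method argument is shorter but yields only existence of one good subset; for the purpose of Lemma~\ref{lem:fkrw-ss} as used in Theorem~\ref{thm:constantcapfixtri}, either suffices.
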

% \footnote{I do think this needs a proof in the appendix.
%   MMH: But we don't have space for it; it's not that crucial }

%Omitted pretty lemma
%\begin{lemma}\label{lem:constantsetapprox}
%Let $A$ and $B$ be disjoint sets of links with weight function $W_+$
%and $c_1, c_2$ be values such that,
%for each link $\ell_w \in B$,
%\begin{equation}
%  W_+(A,\ell_w) > c_1
%\label{eqn:set1}
%\end{equation}
%and for each link $\ell_u \in A$, 
%\begin{equation}
% W_+(\ell_u,B) \le c_2\ .
%\label{eqn:set2}
%\end{equation}
%Then, $|A| > \frac{c_1}{c_2} |B|$.
%\end{lemma}
%
%\begin{proof}
%  By summing (\ref{eqn:set1}) 
%over all links $\ell_w$ in $B$, we get that
%$W_+(A,B) > c_1 |B|$,
%%\[ W_+(A,B) > c_1 |B|\ , \]
%and by summing (\ref{eqn:set2}) over all links $\ell_u$ in $A$,
%we get that $W_+(A,B) \le c_2 |A|$.
%%\[ W_+(A,B) \le c_2 |A|\ . \]
%Combining the two bounds yields the lemma.
%\end{proof}

The approximation that we can prove for {\multicapacity} has actually better dependence on $\zeta$ than what follows for {\capacity} from \cite{SODA11}.
A lower bound of $\Omega(2^{\zeta-o(1)})$ on the approximability of {\capacity} \cite{SODA11,us:podc14} implies that the bound is close to best possible.

% Proving $O(3^\zeta)$ approximation for Algorithm \ref{alg:capfixtri} 
% is now straightforward.

\begin{theorem}\label{thm:constantcapfixtri}
Algorithm \ref{alg:capfixtri} yields a $O(3^\zeta)$-approximation for \multicapacity.
\end{theorem}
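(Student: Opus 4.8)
The plan is to compare the greedy solution $(S_1,\dots,S_k)$ produced by Algorithm~\ref{alg:capfixtri} against a fixed optimal solution $(O_1,\dots,O_k)$, where $O_i \subseteq L_i$ is feasible for each $i$. First I would pass to a ``signal-strengthened'' version of the optimum: applying Lemma~\ref{lem:fkrw-ss} to each $O_i$ with $K = 3^\zeta/\beta$ yields a $3^\zeta/\beta$-bi-feasible subset $O_i' \subseteq O_i$ with $|O_i'| = \Omega(\beta |O_i| / 3^\zeta) = \Omega(|O_i|/3^\zeta)$, since $\beta$ is a constant. So it suffices to show that $|S_1 \cup \cdots \cup S_k| = \Omega(|O_1' \cup \cdots \cup O_k'|)$, i.e.\ the greedy output is within a constant factor of the strengthened optimum; combined with the two reductions this gives the claimed $O(3^\zeta)$ ratio.

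Next I would relate the $S_i$ to the intermediate sets $X_i$ built in the first loop. A standard argument shows $|S_i| \ge |X_i|/2$: by construction every $\ell_v \in X_i$ was added only when $W_+(X_i^{<v}, v) \le 1/2$, and since $W_+$ accounts for in- plus out-affectance between $\ell_v$ and the smaller-decay links already present, summing over $X_i$ bounds $\sum_{\ell_v \in X_i} a_{X_i}(v)$ by $|X_i|/2 \cdot 2$ or so — actually one bounds the total two-sided affectance within $X_i$ by $|X_i|/2$, hence at most half the links can have $a_{X_i}(v) > 1$, so the filtered set $S_i$ has $|S_i| \ge |X_i|/2$. Thus it is enough to prove $\sum_i |X_i| = \Omega(\sum_i |O_i'|)$.

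The heart of the argument is a charging scheme. Consider any link $\ell_v$ that lies in some $O_j'$ but was \emph{not} placed into any $X_i$ by the algorithm — so at the moment $\ell_v$ was processed, $W_+(X_i, v) > 1/2$ held for every $i = 1,\dots,k$ (where $X_i$ here means its state at that time, which only consists of links $\ell_u \prec \ell_v$). Summing, $\sum_{i=1}^k W_+(X_i, v) > k/2$. Now I invoke Lemma~\ref{thm:W(L,v)}: since $\ell_v \in O_j' \subseteq L_j$ and $\ell_v \prec$ (the unprocessed remainder), and crucially since each $X_i$ is a subset of a $3^\zeta/\beta$-feasible set — indeed the greedy maintains that $X_i$ is $3^\zeta/\beta$-bi-feasible, because the $W_+(X_i,v) \le 1/2$ test enforces exactly the two-sided affectance bound — we may apply Lemma~\ref{thm:W(L,v)} with $\ell_v$ in the role of the external shorter link. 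But wait: Lemma~\ref{thm:W(L,v)} requires $\ell_v \prec L$, and here $\ell_v$ need not precede all of $X_i$; however $W_+(w,v)=0$ unless $\ell_w \prec \ell_v$, so only the sub-collection of $X_i$ preceding $\ell_v$ contributes, and that sub-collection is itself $3^\zeta/\beta$-bi-feasible with $\ell_v$ succeeding it — precisely the hypothesis of the lemma with roles of ``inside'' and ``outside'' adjusted. Hence $W_+(X_i,v) = O(1)$ for each $i$, giving $k/2 < \sum_i W_+(X_i,v) = O(k)$ — this is consistent but not yet a contradiction, so instead I charge: each rejected $\ell_v \in O_j'$ contributes $W_+(X_i,v) > 1/2$ to some $X_i$, and I bill that ``cost'' to the links of $X_i$. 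The total cost billed to a single link $\ell_u \in X_i$ is $\sum_{\ell_v} W_+(u,v)$ over rejected optimum links $\ell_v \succ \ell_u$; bounding this is where Lemma~\ref{thm:W(L,v)} (applied with $\ell_u$ external to the bi-feasible set $O_j' \cap \{\succ \ell_u\}$) gives $O(1)$ per $(u, j)$ pair, hence $O(k)$ per link $\ell_u$ across all frequencies. Combining: $\tfrac12 \cdot (\text{number of rejected optimum links}) \le \sum_i \sum_{\ell_u \in X_i} (\text{cost}) = O\!\big(\sum_i |X_i| \cdot k\big)$ — and one must be careful that this does not lose a factor $k$; the resolution is that each rejected $\ell_v$ fails the test for \emph{all} $k$ frequencies simultaneously, so the $k$ copies of the $>1/2$ bound cancel the $k$ from ``per link across all frequencies,'' leaving $|O_j' \setminus \bigcup_i X_i| = O(\sum_i |X_i|)$. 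Since optimum links that \emph{are} in some $X_i$ are trivially accounted for, this yields $\sum_i |O_i'| = O(\sum_i |X_i|)$, completing the chain.

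The main obstacle, and the step needing the most care, is the bookkeeping in the charging scheme across the $k$ frequencies: one must exploit that a rejected link is blocked on \emph{every} eligible frequency at once to avoid an extraneous $\Theta(k)$ factor, while simultaneously ensuring that the bi-feasibility hypothesis of Lemma~\ref{thm:W(L,v)} genuinely applies to each intermediate set $X_i$ (which follows from the $W_+ \le 1/2$ acceptance rule but should be stated explicitly as an invariant) and to the truncated optimum sets $O_j' \cap \{\ell : \ell \succ \ell_u\}$ (subsets of bi-feasible sets are bi-feasible). Everything else — the two opening reductions and the $|S_i| \ge |X_i|/2$ filtering bound — is routine.
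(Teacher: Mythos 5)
Your proposal is correct and follows essentially the same route as the paper's proof: signal-strengthen each $OPT_i$ via Lemma~\ref{lem:fkrw-ss}, use Markov's inequality to get $|S_i|\ge|X_i|/2$, and double-count the weight $W_+$ between $X$ and the rejected optimum links, bounding it from above by applying Lemma~\ref{thm:W(L,v)} to each $K$-bi-feasible $OPT_i'$ and from below by the $>1/2$ rejection threshold holding on all $k$ frequencies simultaneously, so that the factors of $k$ cancel. The only blemish is your discarded aside asserting that the acceptance rule makes each $X_i$ $3^\zeta/\beta$-bi-feasible (it does not---it only bounds by $1/2$ the affectance involving previously added, smaller-decay links), but your final charging argument never relies on that claim.
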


\begin{proof}
Let $L$ be a set of links and let $L_1, L_2, \ldots, L_k \subseteq L$
be subsets of $L$ where $L_i$ contains the links that are eligible in
frequency $i$.  Let $OPT = OPT_1 \cup OPT_2 \cup \ldots \cup OPT_k$ be
an optimum solution to {\multicapacity} on $L$.  Let $K =3^\zeta/\beta$.  
By Lemma~\ref{lem:fkrw-ss}, there is a $K$-bi-feasible subset $OPT_i'$
in $OPT_i$ of size $\Omega(|OPT_i|/K)$, for each $i \in \{1, \ldots,
k\}$.  Let $OPT' = OPT_1' \cup OPT'_2 \cup \ldots \cup OPT'_k$.

Let $S = S_1 \cup S_2 \cup \ldots S_k$ and $X = X_1 \cup X_2 \cup
\ldots X_k$ be the sets computed by Algorithm \ref{alg:capfixtri} on
input $L$.  We first bound $|S|$ in terms of $|X|$ and then $|X|$ in
terms of $|OPT'|$.  To bound $|S|$ to $|X|$ we bound $|S_i|$ to
$|X_i|$ for every $i \in \{1 \ldots k\}$.  Note that by the
construction of $X_i$, $a_{X_i}(X_i) = W_+(X_i,X_i) \le |X_i|/2\ , $
and thus the average in-affectance of links in $X_i$ is at most $1/2$.
Since each $S_i$ consists of the links in $X_i$ of affectance at most $1/2$,
by Markov's inequality, $|S_i| \geq |X_i|/2$.
%To bound $|X|$ to $\text{OPT}$ we use that by Markov's inequality, for each feasible set $S$ there is a bi-feasible subset $S'$ with $|S'| \ge |S|/4$.

By the definition of the algorithm, $W_+(X_i,\ell_w) > 1/2$, $\forall
\ell_w \in OPT' \setminus X, X_i \in X$.  Summing over all links
$\ell_w$ in $ OPT' \setminus X$, we get that $W_+(X_i, OPT'\setminus
X) > |OPT'\setminus X|/2$.  Furthermore, since $OPT'$ contains $k$
$K$-feasible sets, it follows by Lemma \ref{thm:W(L,v)} that
$W_+(\ell_v, OPT') = O(k)$, for each $\ell_v \in X$.  Summing over all
links in $X_i$, we get that $W_+(X_i,OPT) = O(k |X_i|)$.  Combining
yields that for any set $X_i$ we have $|OPT'\setminus X|/2 < W_+(X_i,
OPT'\setminus X_i) \in O(k |X_i|)$, giving that $|X_i| =
\Omega(|OPT'\setminus X|/k)$.
%%\[ W_+(A,B) \le c_2 |A|\ . \]
%Combining the two bounds yields the lemma.

%With this we can invoke Lemma \ref{lem:constantsetapprox} with $A=X_i$, $B=OPT'\setminus X_i$, $c_1 = 1/2$ and $c_2 = O(k)$, obtaining that 
%$|X_i| = \Omega (|OPT'\setminus X_i|/k) = \Omega(|OPT'\setminus X|/k)$.
Summing over $i$ then gives $|X| = \sum_i |X_i| = \Omega(|OPT' \setminus X|)$.
Thus, the solution output by the algorithm satisfies
$|S| \ge |X|/2  = \Omega(|OPT'|) = \Omega(|OPT|/K) = \Omega(|OPT|/3^\zeta)$.
\end{proof}

In summary, the metricity definition implies that a large range of
algorithmic results from {\geomodel} carries over without change.
Thus, {\decaymodel} has both the desired generality and amenability to
algorithmic analysis.  We also extend known results on {\capacity} to
handle frequency-sensitive links, and improve the dependence of the
approximation on $\zeta$ along the way.

%This concludes our observations on performance guarantees in $\decaymodel$. 
%We proved a structural lemma which allows us to carry over a vast collection of analytical results to \decaymodel{} without change.
%We furthermore showed a $O(3^\zeta)$ approximation to \multicapacity as a proof of concept.

%\input{theory}

\section{Related Work}
%\section{Related Work and Our Contributions}
\label{sec:related}

%Gupta and Kumar \cite{kumar00} proposed the protocol model with a larger interference range than communication range.
%They also initiated the analytic study of additive interference models, proposing geometric path loss.
%%%%% has been subject to extensive average-case analysis, under the codename `scaling laws'

% Experimental

%% \subsubsection{Experimental research}
%% Failure of simple models
Numerous experimental results have indicated that simplistic range-based models of wireless reception
are insufficient \cite{ganesan2002,Zhao2003,kotz2004experimental,aguayo2004link,padhye2005estimation,Zhou2006,zamalloa2007}. Besides 
directionality and asymmetry, signal strength is not well predicted by distance.
Interference patterns are also insufficiently explained by pairwise relationships,
suggesting the need for additive interference models, both experimentally \cite{kotz2004experimental,Moscibroda2006Protocol,MaheshwariJD2008} %% DKHP2006,
and analytically \cite{Moscibroda2006Protocol,MoWa06,iyer2009right}. %% behzad2003,

%% Measurement-based models: Son et al + Reis et al.
The weakness of the known prescriptive models for interference and packet reception
has led experimentalists to form models based on measurements.
Son, Krishnamachari and Heidemann \cite{son2006} sho\-wed that 
the SINR formula, using separately measured RSS values,
is the main factor in predicting PRR.
They found PRR to be dependent on the number of interferers, which was 
not supported in later studies \cite{MaheshwariJD2008,chen2010} and attributed to 
hardware variability or the quality of the CC1000 radios used.
Reis et al.\ \cite{reis2006} independently proposed a similar approach on a 802.11 platform.
They found substantial variability across nodes, and that similarity across time was sufficient 
over moderate time scales of minutes to hours, but that prediction accuracy degrades over longer periods.

% MJD'08
Maheshwari, Jain and Das \cite{MaheshwariJD2008} compared different
models of interference using two testbeds with variations in hardware,
power level, and indoor/outdoor. They concluded that the physical model
gives best accuracy, albeit less than perfect.
% They put forward a \emph{graded} model, that adds up the success probabilities
% of links, including those below the threshold, and argue that it gives
% considerably better results.
In their followup workshop paper \cite{MaheshwariJD2008a}, they focus
on the relationship of \emph{joint interference} (SINR with multiple
interferers) to PRR.  They gave strong evidence that the basic formula
works, and verify the \emph{additivity} of the SINR model.

% Chen&Tersis'10: Calibration
Chen and Terzis \cite{chen2010} proposed a method for calibrating RSSI readings to combine interference measurements from different mo\-tes. 
%By a controlled increase in interference from  special hardware, 
They found that Tmote Sky motes consistently report RSSI values inaccurately, even reporting non-injective relationships. By aligning measurements from different motes, they obtained much better SINR vs. PRR relationship, reducing the width of the intermediate range significantly. 
%In particular, outliers that previously caused a width of 10+ dB are now largely eliminated. 
They suggested that this may explain much of the imperfect relationship observed in \cite{MaheshwariJD2008}.

% MB-models : 802.11 + efficiency
Measurement-based approaches have also been proposed in the context of 802.11 \cite{gummadi2007,qiu2007,sevani2012sir}, 
%%%  \cite{gummadi2007,qiu2007,kashyap2007,sevani2012sir}, 
where carrier sense and control packets complicate the picture.
Recent efforts have focused on reducing the required measurements
by deducing interference using, e.g., linear algebra \cite{qi2013} or regression \cite{huang2011}.
%% More on sevaniraman2012?
Boano et al.~\cite{boano2010} also studied the impact of external
interference on sensor-network MAC protocols, and identified
mechanisms to improve their robustness. 
% In \cite{boano2011jamlab} they
% develop an infrastructure to augment sensornet testbeds with accurate
% interference generation in order to assess protocols under interference.

% \subsubsection{Theoretical research}
% Theoretical work
% Most theoretical work in the past was focused on graph-based models.
% Worst-case analysis on the SINR model was initiated by Moscibroda and Wattenhofer \cite{MoWa06}.
% Since about 2009, there has been a growing body of work \cite{GHWW09,FKV09,HW09,KV10,SODA11,AvinEKLPR12}, nearly all of it assuming geometric path loss. 

%\subsection*{Other Models of Signal Propagation and Attenuation}
%
% Two-ray and similar models etc.
The engineering literature has introduced various extensions in order to capture reality more faithfully.
In the \emph{two-ray} model \cite{Goldsmith}, 
which captures reflection off the ground, 
signal decays in the near-term as a certain polynomial, but as a higher degree polynomial further away.
More generally, the \emph{multi-ray} model
%models the signal decay as a piece-wise log-linear function (i.e., piecewise linear on a log-log chart).
has the signal (in \dBm, log scale) decaying via a piecewise linear model with 
segments of increasing slopes. The function is typically empirically determined.
We note that these models can be captured by {\decaymodel}, with $\zeta$ as the steepest slope.

% Empirical models
There are also \emph{empirical models} \cite{Goldsmith}, such as the \emph{Okumura} and \emph{Hata} models, that take the environment into account.
% Consider the \emph{Okumura} and \emph{Hata} models, which are some of the most widely used models for signal prediction in urban areas.
\iffalse
Okumura made extensive measurements throughout Tokyo and derived a set of curves for the median path loss as the sum of free space path loss and the median attenuation across all environments minus height gain factors for the sending antenna and the receiving antenna and a gain factor due to the type of environment.
%% This sentence is awful but I'm not sure how to fix it.  Oh, it's been commented out. --foley
The Hata model is basically an empirical formulation of the Okumura model obtaining a closed form formula, as opposed to graphical data, for median path loss in urban areas as described in the Okumura model.
\fi
These could also be used to generate a gain matrix.
Also, accurate estimates can be obtained via general \emph{ray tracing} when highly detailed information is available.

% Relaxations of SINR
Some studies have allowed signal strength to fluctuate from the geometric path loss by up to a constant factor \cite{moscibroda06b,HW09}.
%This can be achieved more generically by a ``signal strengthening'' result \cite{HW09}.
% that also shows that changes to the SINR threshold $\beta$ only affect constant factors. 
This is of limited help in general, however, since even minor fluctuations of the value of $\alpha$ can 
cause arbitrarily large changes in signal strength \cite{gu2012path}.
%
%General metric spaces, first treated in \cite{FKV09}, can additionally capture various conformations.
%However, even the relatively simple two-ray model cannot be captured with geometric path loss in a metric space.
% The setting of most SINR research has been the plane with the additional assumption that $\alpha > 2$, but most analytic results can actually be translated to doubling metrics \cite{us:talg12} without significant change. 

% Probabilistic models
Various probabilistic models also exist.
%This leaves out probabilistic models which are somewhat out of scope for our work.
%Various probabilistic extensions have also been introduced. 
On one hand, they are means to prescribe non-geometric components to signal reception, which is useful for simulation studies (which {\decaymodel} cannot provide), but which could be captured more accurately by actual measurements. 
On the other hand, these can also model aspects that are necessarily random, in which case they could complement the deterministic {\decaymodel}. 
\iffalse
One example is \emph{log-normal path loss}, where geometric path loss is scaled by an exponentially distributed random variable. 
Another example is Rayleigh fading, which is a generic approach, also involving exponential r.v., that has been used to model the multi-path fading (i.e., amplification and cancellation due to the signal traversing across different paths).

% Probabilistic variants
It has frequently been noted that the actual packet-loss of wireless
signal reception is not a deterministic function of signal
strength. One model for this is a graded version of SINR
\cite{Santi09}, where transmission is successful only with some
constant probability for signal strengths in an intermediate
range. This can sometimes be accounted for explicitly in algorithms
(e.g., \cite{PODC13}). A more general approach involves the so-called Rayleigh fading model. 
It was shown in \cite{Dams2012} that this probabilistic extension did
not significantly affect the performance of {\capacity}
algorithms, incurring only a $O(\log^* n)$-factor increase in performance).  
\fi

% The most commonly used simulator, ns-2, supports only three models: free space (geome% tric SINR), two-ray and probabilistic shadowing. The last introduces a multiplicative exponentially distributed random variable to geometric SINR.

%\emph{Empirical models}
%(Terrain based, empirical, statistical)

\section{Conclusion}
\label{sec:discussion}\label{sec:conclusion}

Effective use of the wireless spectrum requires an understanding of interference from theoretical and experimental vantage points.
A growing body of algorithmic work on worst-case wireless interference under the SINR threshold model assumes that signals decay geometrically with distance,
the \geomodel{} model.

We outline an approach for incorporating realism into the interference model while seamlessly generalizing previous theoretical results.
By leveraging a matrix of pairwise RSS between wireless motes instead of geometric path loss, the \decaymodel{} model was
significantly better at predicting PRR performance in our experiments on two indoors testbeds.
The RSS matrix also appears resilient to temporal factors, with prediction accuracy of 95\% when using an RSS matrix created weeks in advance.

We defined the notion of metricity, which quantifies the proximity of the RSS matrix to a distance metric.
Through experiments, we showed how the concept effectively measures the complexity of the underlying environment.
With metricity as a harness, worst-case theoretical results that hold under general metrics in the \geomodel{} model 
can now be translated with trivial modifications to the more realistic \decaymodel{} model.
Moreover, the translation retains almost identical performance ratios for all such algorithms, with the metricity parameter $\zeta$ replacing the path loss constant $\alpha$ in the \geomodel{} model.

As a case study of the metricity concept, we consider multi-path fading by allowing transmitters to choose between several wireless frequencies.
When using multi-path fading in our experiments, we found that environments with extensive multi-path propagation 
exhibited better metricity values.
By fusing empirical measurements into an analytical model commonly used for worst-case theoretical analysis, 
our approach illustrates a methodology for harmonizing algorithmic theory of wireless interference with real-world observations.

\bibliographystyle{abbrv}
\bibliography{references}

\end{document}